\newcommand{\stkout}[1]{\ifmmode\text{\sout{\ensuremath{#1}}}\else\sout{#1}\fi}
\newcommand{\blue}{}
\definecolor{mygreen}{rgb}{0.1,0.75,0.2}
\providecommand{\bbs}[1]{\left(#1\right)}
 \newtheorem{thm}{Theorem}[section]
 \newtheorem{prop}[thm]{Proposition}
 \theoremstyle{definition}
 \theoremstyle{remark}
 \newtheorem{rem}{Remark}
 \numberwithin{equation}{section}
\newcommand{\pt}{\partial}
\newcommand{\eps}{\varepsilon}
\newcommand{\ud}{\,\,\mathrm{d}}
\newcommand{\8}{\infty}
\newcommand{\F}{\mathcal{F}}
\newcommand{\bR}{\mathbb{R}}
\newcommand{\cc}{\mathcal{C}}
\newcommand{\ssr}{\xi}
\newcommand{\Ray}{\mathcal{R}}
\newcommand{\cl}{\scriptscriptstyle{\text{CL}}}
\newcommand{\nsg}{\scriptscriptstyle{\text{SG}}}
\newcommand{\nsl}{\scriptscriptstyle{\text{SL}}}
\begin{document}

\title[Surfactant-dependent contact line dynamics]{Surfactant-dependent contact line dynamics and droplet spreading on textured substrates: derivations  and computations}

\author[Y. Gao]{Yuan Gao}
\address{Department of Mathematics, Purdue University, West Lafayette, IN}
\email{gao662@purdue.edu}

\author[J.-G. Liu]{Jian-Guo Liu}
\address{Department of Mathematics and Department of
  Physics, Duke University, Durham, NC}
\email{jliu@math.duke.edu}

\begin{abstract}
We study spreading  of a droplet, with insoluble surfactant covering its capillary surface, on a textured substrate. In this process, the surfactant-dependent surface tension dominates the behaviors of the whole dynamics, particularly the moving contact lines. This allows us to derive the full  dynamics of the droplets laid by	 the  insoluble surfactant: (i) the moving contact lines, (ii) the evolution of the capillary surface,   (iii) the surfactant dynamics on this moving surface with a  boundary condition  at the contact lines and (iv) the incompressible viscous fluids inside the droplet. Our derivations base on Onsager's principle with  Rayleigh dissipation functionals for either the  viscous flow inside droplets or the  motion by mean curvature of the capillary surface.  We also prove the Rayleigh dissipation functional for   viscous flow case is stronger than the one for the motion by mean curvature.     After  incorporating the textured substrate profile, we design a numerical scheme based on unconditionally stable explicit boundary updates and moving grids, which enable efficient computations  for many challenging examples showing  significant impacts of the surfactant to the deformation of droplets. 
\end{abstract}
\keywords{Onsager reciprocal relations, dynamic surface tension, dynamic contact angles, Marangoni flow, Stokes flow}
\date{\today}
\maketitle
\section{Introduction}

Dynamics of droplets spreading on an impermeable substrate is not only a fundamental mathematical problem but also has  a wide range of practical applications such as droplet-based microfluidics in drug discovery, sensor design, enhanced oil recovery, surfactant replacement therapy and other dispersion technology \cite{Craster_Matar_2009, shembekar2016droplet, chou2015recent, olajire2014review}. {\blue Among those applications, surfactant, as one of the main material forms in soft matter, plays an essential role during whole spreading processes.   Other mesoscopic
constituents in  soft matter include polymers, colloids and liquid crystals; see book \cite{Doi_2013} by  \textsc{Doi}. Surfactant (i.e. surface-active agent) molecules are made of two parts,  hydrophilic part and hydrophobic part. These help surfactant to form different types of micelles, depending on  environment, so that they can either dissolve in a solvent or cover on the surface of a liquid droplet. We focus on  insoluble surfactant (known as Langmuir monolayer) in this paper. 
Surface energy, or in general interfacial energy, is very important for  flows and deformation of small liquid droplets, where the ratio between the surface area and the bulk volume is large. The addition of  surfactant  will decrease the effective surface tension of the capillary surface of a droplet if the surface energy density is convex w.r.t. the surfactant concentration, which will be explained in the next paragraph. As the insoluble surfactant spreads on the evolving capillary surface, the change of surfactant-dependent surface tension will lead to the surfactant-driven flow, such as the Marangoni flow and fingering phenomena. Most of these surfactant-driven flows are lack of mathematical validations and analysis.    Particularly, when the droplet laid by insoluble surfactant  is placed on an impermeable substrate, the dynamics of the capillary surface, the moving contact lines and the concentration of surfactant are all coupled together.  Therefore, mathematical derivations, validations and numerical simulations for dynamics of  a droplet coupled with moving contact lines are important and demanding  topics; see review article \cite{deGennes_1985} by \text{de Gennes}. }

First, the spreading process of   a small droplet placed on an impermeable textured substrate is mainly  driven by the capillary effect. That is to say, the droplet tends to minimize  the surface energy $\F$, which consists of the surface energy  of three interfaces among solid, liquid and gas. {\blue Here the surface energy density for solid-liquid interfaces (solid-gas resp.) is  denoted as $\gamma_{\nsl}$ ($\gamma_{\nsg}$ resp.)  and the surface energy density  for the liquid-gas interface without surfactant is denoted as $\gamma_0$. The variation of the total surface energy will provide the force that dominates the dynamics of small droplets.}
Now we suppose there are  insoluble surfactant  concentrating on  the evolutionary capillary surface, i.e., the interface between the liquid inside the droplet and the gas surrounding it. With the surfactant, the surface energy density on the capillary surface will depend on the surface concentration of surfactant $c$ and will be denoted as  $e(c)$.  During the spreading process,    change of the surface concentration of surfactant $c(\cdot, t)$ is induced by   stretching and evolution of the capillary surface and the surfactant also has its own convention and diffusion on the capillary surface. More importantly,  the   surfactant-dependent  surface tension $\gamma(c)$,  with the unit force/length, has the same unit with the energy density $e(c)$ (energy/area) but no longer equals $e(c)$. The relation  between the surfactant-dependent
surface tension $\gamma(c)$ and the free energy density  $e(c)$ of the surfactant-covered capillary surface  is given by $\gamma(c)=e(c)-e'(c) c;$
 see  \cite{Doi_2013, Garcke_Wieland_2006} and derivations in Section \ref{sec_energy_s}. {\blue Thus if the surface energy density $e(c)$ is convex, then $\gamma'(c)=-e''(c)c\leq 0$.  Therefore, as the surfactant disperses along the evolving capillary surface,  the   surfactant-dependent
surface tension $\gamma(c)$ will in turn significantly alter the motion of the capillary surface and moving contact lines, i.e., the lines where three phases (liquid, gas and solid) meet.  } 
This fundamental question on surfactant effect for the contact line dynamics of droplets was  discussed in the review article \cite{deGennes_1985}  by \textsc{de Gennes}.

 As mentioned above, the surfactant-dependent capillary effect dominates the whole spreading process, {\blue so the evolution of the geometric shape of the droplet coupled with the dynamics of the concentration of the insoluble surfactant on the capillary surface  are the main focus of this paper.}  We regard the geometric states, including wetting domain $\Omega_t$ and capillary surface $h(x,y, t)$, as the configuration for the droplet dynamics.  We will first derive  dynamics of the surfactant moving with the capillary surface represented by a graph function $h(x,y, t)$  with some proper boundary conditions at the contact lines. Then combining the total energy $\F$ defined in \eqref{3Denergy}, a Rayleigh dissipation functional defined in \eqref{RayQ} and  Onsager's principle \cite{Doi_2013, Doi_2021}, we derive the governing equations for the whole system. {\blue As explained below, we focus on how the surfactant-dependent surface tension $\gamma(c)$ naturally appears and dominates in the whole system. Explicitly, we will see the surfactant-dependent Laplace pressure $\gamma(c) H$ and the gradient of surfactant-dependent surface tension $\nabla_s \gamma(\cc)$ drive the motion of the capillary surface while the surfactant-dependent unbalance Young force $F_s$ drives the motion of contact lines.}
 
{\blue In the first special case that the viscosity of the fluids inside the droplets are neglected, we consider the surfactant move with the evolving capillary surface, i.e., there is no additional tangential convection w.r.t. the capillary surface for the surfactant, called ``no free-slip'' case.   In Section \ref{sec_2},} we first observe the  motion  of the capillary surface is driven by  the surfactant-dependent force $\gamma(c) H$ per unit area (known as the Laplace pressure), where $H$ is the mean curvature of the capillary surface. This observation mainly relies on the energy law \eqref{CL_energy_ex} for the capillary surface.  {\blue In this paper, we choose the convention for the mean curvature notation $H$ so that a sphere with radius $R$ in 3D has the mean curvature $H=\frac{2}{R}$.} Second,  the most complicated  competition, relaxation and balance  happen at the contact lines, so we need to derive a  surfactant-dependent unbalanced Young force at the contact lines.
Without the surfactant, the unbalanced Young force \cite{deGennes_1985} at the contact lines is
$$F_Y = \gamma_{\nsg} - \gamma_{\nsl} - \gamma_0 \cos \theta_{\cl} = \gamma_0 \bbs{\cos \theta_Y - \cos \theta_{\cl}}, \quad \cos \theta_Y:= \frac{ \gamma_{\nsg} - \gamma_{\nsl}}{\gamma_0},$$
where $\theta_{\cl}$ is the dynamic contact angle, i.e., the  angle (inside the droplet) between capillary surface and the solid substrate; see Fig. \ref{fig:ill}. 
Then with a dissipation mechanism,   Onsager's  linear response theory with friction coefficient $\ssr$, one can obtain the relation between the contact line speed $v_{\cl}$ and this driven force, and thus obtain the dynamics of the moving contact lines $\ssr v_{\cl} = F$. However, with  the presence of the surfactant, how does the surfactant transport  and how does the energy exchanges at the moving contact lines are challenging questions. 
We will first derive  a Robin-type  boundary condition \eqref{3D_bc_c} of the surfactant dynamics at the moving contact lines, which is consistent with both the mass conservation law and the energy conservation law; see Section \ref{sec_bc_3d}.  Then we adapt this boundary condition to derive the surfactant-dependent unbalanced Young force at the contact lines
\begin{equation}
F_s = \gamma_{\nsg} - \gamma_{\nsl} - \gamma(c) \cos \theta_{\cl} ,
\end{equation}
in which the surfactant-dependent surface tension is exactly the one $\gamma(c) = e(c)-e'(c) c$.  Hence the dynamics of the moving contact line with the surfactant effect is
\begin{equation}\label{MCL}
\ssr v_{\cl} = F_s.
\end{equation}
We refer to \eqref{eq_full} for the  corresponding effective Young force after including a textured substrate.

   In summary, in the special ``no free-slip'' case,   the full spreading process of the droplets can be described by (i) the continuity equation of the surfactant, (ii) the moving contact lines and (iii) the evolution of the capillary surface via curvature flow; see \eqref{3Dfull} for 3D droplets with a volume constraint and see \eqref{eq_full} for 2D droplets placed on a textured substrate including the gravitational      effect.     
       Our derivations for the geometric motion of droplets, basing on a graph representation $h(x,y, t)$ of the capillary surface,   also enable us to design an unconditionally stable and efficient numerical scheme; see Section \ref{sec_num}.   

If we further consider {\blue the general case that  there are viscous bulk fluids inside the droplet and surfactant is not only move with the capillary surface but also has ``free-slip'' with the additional tangential speed $v_s$, then we will derive the  surfactant-induced Marangoni flow inside the droplets in Section \ref{sec_Mar}.}  The derivations for the purely geometric motion in Section \ref{sec_2} can be easily adapted to the bulk  viscous flow     based on Onsager's principle with a new Rayleigh dissipation functional.
In this case,  there is an additional tangential convention  of the surfactant on the capillary surface  contributed from the bulk fluid velocity. 
This convention, together with  the surface gradient of the surfactant-dependent surface tension $\nabla_s  \gamma(\cc)$, leads to the Marangoni flow. Here $\cc$ is the surface concentration in \eqref{defCC}. {\blue Notice this additional force $\nabla_s \gamma(\cc)$ in the variation of the total surface energy exerted at the capillary surface $S_t$ is induced by the spatial-changes of the surfactant-dependent surface tension; see detailed explanations in \eqref{def_FF} and \eqref{claimG}. Therefore, this surface gradient is called Marangoni stress, and this phenomena is called Marangoni effect.} Then the Robin-type  boundary condition \eqref{3D_bc_c} for $c$ at the contact lines becomes no-flux boundary condition \eqref{bc-noflux}.
With the additional Marangoni stress, after incorporating the transport equation for the surfactant,  Onsager's principle immediately yields
the corresponding governing equations \eqref{eq_Mar} for the  surfactant-induced Marangoni flow model for droplets on a substrate; see details in Section \ref{sec_Mar}. We also show the Onsager reciprocal relations for both geometric motion case and the viscous flow case and in Proposition \ref{prop_korn}, we prove   the  dissipation functional for the viscous flow case is stronger than the one in the geometric motion model. {\blue This lower bound of the dissipation functional also helps us characterize the steady profile of the whole dynamics as a spherical cap profile with constant mean curvature while the contact angle being Young's angle; see \eqref{capS}.}

In Section \ref{sec_num}, we  propose a  numerical scheme for the full dynamics of 2D droplets laid by the surfactant and placed on a textured substrate. This unconditionally stable scheme  relies on the combination of the surfactant updates, which  constantly change the effective surface tension $\gamma(c)$,  and the splitting method with the 1st/2nd order accuracy that developed   in  \cite{gao2020gradient} for the purely  geometric motion of  a single droplet without  surfactant. Specifically, at each step, we first use  unconditionally stable explicit updates for the moving  contact lines, which efficiently decouple the computations for the motion  of the capillary surface and the contact line dynamics. Then we adapt the arbitrary Lagrangian-Eulerian (ALE) method to handle  the moving grids to update the profile of the  capillary surface and the concentration of surfactant with  Robin-type boundary condition \eqref{BC_c} at the contact lines. Based on this, some challenging examples showing  significant effects of surfactant to the droplets dynamics will be conducted in Section \ref{sec_exm}. These include  (i) a surface tension decreasing phenomena and  asymmetric capillary surfaces due to  presence of surfactant; (ii) an enhanced rolling down for droplets placed on an inclined substrate;  (iii) droplets on a textured substrate or a container with different surfactant concentrations.

{\blue We incompletely list some recent theoretical and numerical studies on this subject, including droplets with insoluble or soluble surfactant.
 The  lubrication approximation for the 
thin film covered by
insoluble surfactant are investigated by \textsc{Garcke and Wieland} in 
\cite{Garcke_Wieland_2006}. They also proved the global existence and positivity for the solution to the resulting thin film equation coupled with transport of insoluble surfactant. However, the contact line dynamics was not considered in \cite{Garcke_Wieland_2006}.  
 Some numerical methods for computing the droplet dynamics coupled with moving contact lines and insoluble surfactant are developed; see \cite{lai2008immersed, Lai_2010} for the immersed boundary method, see \cite{xu2014level, Zhang_Xu_Ren_2014} for the level set method and see \cite{Ganesan_2015} for an  arbitrary Lagrangian–Eulerian finite element method. There are many other  studies on the modeling and measurement of the surfactant enhancement for  spreading and evaporation of  droplets in various physical situations; c.f. \cite{Karapetsas_Craster_Matar_2011, Karapetsas_Sahu_Matar_2016}.  We refer to \cite{chen2014conservative, gao2017global, wu2019drying} for droplets or thin film involved dynamics coupled with soluble surfactant. Finally, for  general derivation methods for complex fluids via Onsager's principle, we refer to \textsc{Wang, Qian and Sheng} \cite{Qian_Wang_Sheng_2006}
and a recent review article by \textsc{Doi} \cite{Doi_2021}. 
 }

The organization of this paper is as follows. In Section \ref{sec_2}, we derive the geometric motion of 3D droplets, i.e., the moving contact lines, the evolution of the capillary surface and the surfactant dynamics on it, in which we incorporate  the surfactant-dependent surface tension $\gamma(c)$. In Section \ref{sec_Mar}, we derive the full dynamics of a 3D droplet with the surfactant-induced Marangoni flow inside it.  In Section \ref{sec_num}, we present the numerical scheme for 2D droplets placed on an inclined textured substrate based on the splitting method. In Section \ref{sec_exm}, we conduct some challenging examples showing the significant contributions  of the surfactant to the whole spreading process.

\section{Derivation for 3D contact line dynamics with surfactant }\label{sec_2}
We study the motion of a 3D droplet placed on a substrate, which is identified by the region $A_t:=\{(x,y, z);~ (x,y)\in\Omega_t, \, 0\leq z\leq  h(x,y,t)\}$ with a sharp interface. The motion of this droplet is described by a moving capillary surface $S_t$, and a partially wetting domain $\Omega_t$ with a free boundary $\pt \Omega_t$ (physically known as the contact lines);  see Fig. \ref{fig:ill}(a). To clarify notations, 
let $\Omega_t$ be a wetting domain, which is a simply connected 2D open set. Let $h(x,y, t), (x,y)\in \Omega_t$ be the graph representation  for the moving capillary surface. Then the capillary surface can be represented as
\begin{equation}
S_t:= \{(x,y,h(x,y,t),\quad (x,y)\in \Omega_t\}.
\end{equation}

Denote  $\gamma_{\nsl}$ ($\gamma_{\nsg}$ resp.) as the interfacial surface energy density between solid-liquid phases (solid-gas resp.). $\gamma_{\nsl}$, $\gamma_{\nsg}$ are constants but the interfacial surface energy on the capillary surface, i.e., the interface between liquid and gas,  will depend on the insoluble surfactant on it.  {\blue Denote $\cc(x,y,z,t)$  as  the surface concentration of the surfactant 
 on the capillary surface, i.e., the number of the surfactant molecules per unit area\footnote{We remark the standard notation in chemistry for the surface  concentration is $\Gamma$.}. 
 Denote 
 \begin{equation}\label{defCC}
 c(x,y,t):=  \cc(x,y, h(x,y,t),t)
 \end{equation}
 as the ``concentration'' in terms of measure $\sqrt{1+|\nabla h|^2} \ud x \ud y$ on $\Omega_t$, hence the concentration in surface element satisfies
 \begin{equation}
 \cc \, \mathcal{H}^2(S_t) = c \sqrt{1+|\nabla h|^2} \ud x \ud y.
 \end{equation}
 Here $\mathcal{H}^2(S_t)$ is the Hausdorff measure of $S_t$. 
 We will compute the transport of surfactant by using this ``concentration'' $c(x,y,t)$.  This derivation in terms of $(x,y)\in \Omega_t$ is much simpler and is equivalent to the transport of $\cc(x,y,z,t)$ on the moving surface $S_t$; see Proposition \ref{prop_equiv} below. }
 
Let $e(c)$ be the  surface energy density on the capillary surface.
Then the total surface energy  of the droplet is
\begin{equation}\label{energy}
\F(h(x,y,t), \Omega_t, c(x,y, t)):= \int_{\Omega_t}  e(c) \sqrt{1+ |\nabla h|^2} \ud x \ud y + (\gamma_{\nsl}-\gamma_{\nsg}) \int_{\Omega_t} \ud x \ud y.
\end{equation}

We assume the following two constraints: the volume constraint $V$ for the droplet and the total mass constraint $M_0$ for the surfactant, i.e. 
\begin{equation}\label{const}
\begin{aligned}
\int_{\Omega_t}  h \ud x \ud y = V,\qquad 
\int_{\Omega_t}  c \sqrt{1+|\nabla h|^2} \ud x = M_0.
\end{aligned}
\end{equation}
Define the contact angles (the angle  inside the droplet $A$ between the capillary surface and the solid substrate) at contact lines $\pt \Omega$ as $\theta_{\cl}$ such that
\begin{equation}\label{con-ang}
\tan \theta_{\cl} = |\nabla h|.
\end{equation}
\begin{figure}
\includegraphics[scale=0.5]{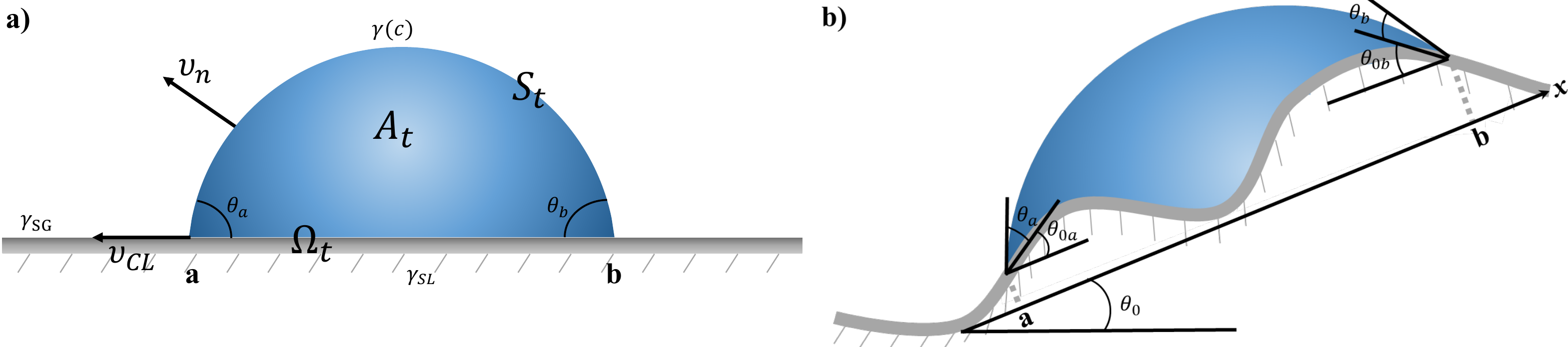} 
\caption{Illustration of surface tensions $\gamma_{\nsg}, \gamma_{\nsl}, \gamma(c)$ on three interfaces,  contact angle $\theta_{\cl}$, capillary surface $S_t$ and wetting domain $\Omega_t$ for droplets on a plane (left) or on  an inclined substrate with an effective inclined angle $\theta_0$ (right).)  }\label{fig:ill}
\end{figure}

We regard the geometric states, i.e., the wetting domain $\Omega_t$ and the capillary surface $h(x,y, t)$ as  a configuration of  droplets. We assume the surfactant concentrates on the capillary surface, move with the evolutionary surface and may also has its own convention and diffusion on the surface.
In the following subsections, we will first {\blue give some kinematic descriptions for the moving surface and  the surfactant concentration on the capillary surface. Then we impose a boundary condition for the concentration of the surfactant $c$ on a prescribed moving capillary surface $h$ to preserve total mass and based on it we compute the rate of change of total surface energy; see Section \ref{sec2.1_K}.   
The rate of change of total energy $\dot{\F}$ in \eqref{final_rate} quantifies the work done by the open system (capillary surface laid by surfactant and its contact line) against friction \cite{goldstein2002classical}. Then we only need to determine the velocity fields, including normal velocity of the capillary surface $v_n$ and contact line speed $v_{\cl}$, via Onsager's principle.
In Section \ref{sec2.2_E},  from the energetic considerations, we introduce a specific  Rayleigh dissipation functional and using Onsager's principle to derive the contact line motion driven by the surfactant-dependent unbalanced Young force and finally the governing equations for the full dynamics of 3D droplets laid by the surfactant.}

\subsection{Kinematic descriptions for the moving surface,  the surfactant concentration and energy}\label{sec2.1_K}
Surfactant dynamics on an evolutionary surface with the mass conservation law is a well-known model, c.f. \cite{Stone_1990}. For the case the surface  has a graph representation $h(x,y, t), (x,y)\in\Omega_t$, we will provide a simple kinematic descriptions for the moving capillary surface, and the continuity equation for the concentration of the surfactant $c(x,y, t), (x,y)\in\Omega_t$ on the capillary surface. Based on this, the rate of change of the free energy $\F$ will then be calculated.

\subsubsection{The continuity equation for the surfactant  represented in the $xy$-plane}
{\blue
First, we describe the motion of the capillary surface $S_t$. Given a capillary surface with a graph representation $h(x,y, t), (x,y)\in\Omega_t$, any point on this moving capillary surface can be represented as
\begin{equation}
X(t) = (x(t), y(t), h(x(t), y(t), t)).
\end{equation}
Then the observed velocity of this point is
\begin{equation}
\dot{X} = (\dot{x}, \dot{y}, h_x \dot{x}+h_y \dot{y} + h_t).
\end{equation}
We assume there is an underlying velocity field $v\in \bR^3$ driving the motion of the capillary surface, i.e., 
\begin{equation}\label{particleV}
\dot{X}(t)= v(X(t),t).
\end{equation}
To clarify notations for functions of $(x,y)$ and functions of $(x,y,z)$, we introduce notations
\begin{equation}
\begin{aligned}
&v(x,y,z,t)=:(v_x, v_y, v_z)(x,y,z,t),\\
& v(x,y,z,t)\big|_{z=h(x,y,t)} =: (v_1, v_2, v_3)(x,y,t).
\end{aligned}
\end{equation}
Using these notations,  \eqref{particleV} implies the evolution of the surface  in terms of $(x,y)\in\Omega_t$  
\begin{equation}\label{eve}
h_t + v_1 h_x + v_2 h_y = v_3.
\end{equation}
Denote the normal vector as $n := \frac{1}{\sqrt{1+|\nabla h|^2}}(- h_x, - h_y, 1)$ and the tangential vectors as $\tau_1 := (1, 0, h_x), \,\, \tau_2 := (0,1, h_y)$.
Then \eqref{eve} becomes
\begin{equation}\label{velocityNN}
v_n := v \cdot n = \frac{h_t}{\sqrt{1+|\nabla h|^2}}.
\end{equation}
 Now we  express the velocity in the directions of the normal vector $n$ and the tangential vectors $\tau_1,\, \tau_2$ as
\begin{equation}\label{velocity}
\begin{aligned}
&(v_1, v_2, v_3)(x,y,t)= (v_n n +  f \tau_1 + g \tau_2)(x,y,t),
\end{aligned}
\end{equation}
where  $\displaystyle f(x,y,t):=\frac{v\cdot \tau_1}{|\tau_1|^2},\,\, g(x,y,t):=\frac{v\cdot \tau_2}{|\tau_2|^2}$.}

Second, we describe the dynamics of the concentration of the insoluble surfactant on the moving surface.
Recall $\cc(x,y,z, t), (x,y,z)\in S_t$ is the surface concentration of surfactant on the capillary surface  and \eqref{defCC}. Then using \eqref{eve},   we have
\begin{align}
&\frac{\ud}{\ud t} \cc(x(t), y(t), h(x(t),y(t),t),t) = (\pt_t + v \cdot \nabla )\cc = (\pt_t +
v_1 \pt_x + v_2 \pt_y )c =  \frac{\ud}{\ud t} c( x(t), y(t),t). \label{rel_c}
\end{align}
Therefore, the dynamics of the surfactant can be fully described by $c(x,y,t)$ on the $xy$-plane, as explained below.

We derive the continuity equation for $c(x,y,t), (x,y)\in\Omega_t.$ To do so, we use the $xy$-component of velocity $v$ to define a flow map on the $xy$-plane
\begin{align}\label{flowmap}
\left\{ \begin{array}{cc}
\dot{x} =v_1 = v_n n_1 + f = \frac{-h_x v_n}{\sqrt{1+|\nabla h|^2}} + f,\\
\dot{y} = v_2 = v_n n_2 + g = \frac{-h_y v_n}{\sqrt{1+|\nabla h|^2}} + g.
\end{array} \right.
\end{align}
{\blue This flow map defines a  2D moving surface element $\omega_t \subset \Omega_t$ via   $\omega_t =\{(x(t), y(t)); (x_0,y_0)\in \omega_0\}$ 
with any given initial surface element $\omega_0$. In the absence of diffusion, $\omega_t$ can be regarded as a material element.  That is to say, the  mass in the material element $\omega_t$ is conserved
\begin{equation}\label{mass_c}
 \frac{\ud }{\ud t} \int_{\omega_t} c \sqrt{1+ |\nabla h|^2} \ud x \ud y=0.
\end{equation}
}

By \eqref{mass_c} and  the Reynolds transport theorem
\begin{equation}
0 = \frac{\ud }{\ud t} \int_{\omega_t} c \sqrt{1+ |\nabla h|^2} \ud x \ud y
= \int_{\omega_t} \pt_t  \bbs{c \sqrt{1+ |\nabla h|^2}} + \nabla \cdot \bbs{c \sqrt{1+|\nabla h|^2} \left( \begin{array}{c}
v_1\\
v_2
\end{array} \right) } \ud x \ud y.
\end{equation}
Then by the arbitrary of $\omega_t$, the continuity equation for $c(x,y,t)$ is
\begin{equation}\label{eq-con}
\pt_t  \bbs{c \sqrt{1+ |\nabla h|^2}} + \nabla \cdot \bbs{c \sqrt{1+|\nabla h|^2} \left( \begin{array}{c}
v_1\\
v_2
\end{array} \right) }=0 \quad \text{ in } \Omega_t.
\end{equation}
Plugging $v_1, v_2$ defined in \eqref{flowmap}, we obtain the continuity equation for $c$
\begin{equation}
\begin{aligned}
0 =& (\pt_t c) \, {\sqrt{1+|\nabla h|^2}}  + \frac{c}{\sqrt{1+|\nabla h|^2}} \nabla h \cdot \nabla h_t - \nabla \cdot \bbs{\frac{ c h_t}{\sqrt{1+|\nabla h|^2}} \nabla h}  +  \nabla \cdot \bbs{c \sqrt{1+|\nabla h|^2} \left( \begin{array}{c}
f\\
g
\end{array} \right) } \\
=& (\pt_t c) \, {\sqrt{1+|\nabla h|^2}}   - h_t \nabla \cdot \bbs{\frac{ c}{\sqrt{1+|\nabla h|^2}} \nabla h} +  \nabla \cdot \bbs{c \sqrt{1+|\nabla h|^2} \left( \begin{array}{c}
f\\
g
\end{array} \right) } 
\end{aligned}
\end{equation}
After simplification, the continuity equation for $c$ is
\begin{equation}\label{eq-con10}
\begin{aligned}
%0=&\pt_t c   - v_n  \nabla \cdot \bbs{\frac{ c}{\sqrt{1+|\nabla h|^2}} \nabla h}  + \frac{1}{ {\sqrt{1+|\nabla h|^2}} } \nabla \cdot \bbs{c \sqrt{1+|\nabla h|^2} \left( \begin{array}{c}
%f\\
%g
%\end{array} \right) }\\
  \pt_t c- v_n \nabla c \cdot  \frac{ \nabla h}{\sqrt{1+|\nabla h|^2}}   +  v_n c H+ \frac{1}{ {\sqrt{1+|\nabla h|^2}} } \nabla \cdot \bbs{c \sqrt{1+|\nabla h|^2} \left( \begin{array}{c}
f\\
g
\end{array} \right) } =0
\end{aligned}
\end{equation}
for $(x,y)\in \Omega_t$, where $H:=-\nabla \cdot \bbs{\frac{\nabla h}{\sqrt{1+|\nabla h|^2}}}$ is the mean curvature.

\subsubsection{Comparison with the lift-up dynamics of $\cc$ on the capillary surface}\label{sec_lift}
We now compare the continuity equation \eqref{eq-con} for $c(x,y,t)$ on $\Omega_t$ with the original 3D concentration $\cc(x,y,z,t)$ on $S_t$.
We have the following proposition on the equivalent formulation  of the continuity equation in terms of $\cc$ defined on the moving surface $S_t$. The proof of this proposition will be given in Appendix \ref{appA}.
\begin{prop}\label{prop_equiv}
The continuity equation \eqref{eq-con} can be recast as
\begin{equation}\label{eq-sur}
\begin{aligned}
 (\pt_t  + v \cdot \nabla ) \cc +\cc\nabla_s \cdot  v_s + v_n \cc H=0 \quad \text{ on } S_t,
\end{aligned}
\end{equation}
where $H=\nabla_s \cdot n$, $\nabla_s$ is the surface divergence and {\blue $v_s$ is the tangent velocity
$
v_s = v - (v\cdot n) n.$
}
\end{prop}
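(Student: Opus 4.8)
The plan is to derive \eqref{eq-sur} from the flat continuity equation \eqref{eq-con10} (equivalently \eqref{eq-con}) by undoing the change of variables $(x,y)\mapsto(x,y,h(x,y,t))$ and repackaging the resulting terms as surface quantities on $S_t$; since every step is an identity, the two formulations are then automatically equivalent. By \eqref{defCC} we have $\cc=c$ along $S_t$, and \eqref{rel_c} already identifies the ambient material derivative $(\pt_t+v\cdot\nabla)\cc$ with the two-dimensional material derivative $(\pt_t+v_1\pt_x+v_2\pt_y)c$ taken along the flow map \eqref{flowmap}. So the first, purely algebraic step is to substitute $v_1,v_2$ from \eqref{flowmap} and write
\begin{equation*}
(\pt_t+v\cdot\nabla)\cc\big|_{S_t}=\pt_t c-\frac{v_n}{\sqrt{1+|\nabla h|^2}}\,\nabla h\cdot\nabla c+f\,\pt_x c+g\,\pt_y c .
\end{equation*}

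Next I would compare, term by term, the left-hand side of \eqref{eq-sur} (with $\cc=c$ on $S_t$) against \eqref{eq-con10}. The terms $\pt_t c$, the advection term $-v_n(1+|\nabla h|^2)^{-1/2}\nabla h\cdot\nabla c$, and the curvature term $v_n\cc H$ match at once. Applying the product rule to the flux term $\tfrac{1}{\sqrt{1+|\nabla h|^2}}\nabla\cdot\bbs{c\sqrt{1+|\nabla h|^2}\,(f,g)^{\top}}$ splits it as $f\,\pt_x c+g\,\pt_y c+\tfrac{c}{\sqrt{1+|\nabla h|^2}}\nabla\cdot\bbs{\sqrt{1+|\nabla h|^2}\,(f,g)^{\top}}$, so the whole proposition reduces to the single purely geometric identity
\begin{equation*}
\nabla_s\cdot v_s=\frac{1}{\sqrt{1+|\nabla h|^2}}\,\nabla\cdot\bbs{\sqrt{1+|\nabla h|^2}\,(f,g)^{\top}},\qquad v_s=f\tau_1+g\tau_2 ,
\end{equation*}
where $v_s=v-(v\cdot n)n$ by the decomposition \eqref{velocity}, so that $(f,g)$ are precisely the components of the tangent velocity in the coordinate tangent frame $\{\tau_1,\tau_2\}=\{\pt_x(x,y,h),\pt_y(x,y,h)\}$ of the graph chart.

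To prove this identity I would compute the first fundamental form of the graph parametrization, $g_{11}=1+h_x^2$, $g_{12}=h_xh_y$, $g_{22}=1+h_y^2$, so $\det(g_{\alpha\beta})=1+|\nabla h|^2$, and then invoke the Voss--Weyl formula for the divergence of a tangent field in curvilinear coordinates, $\nabla_s\cdot V=(\det g)^{-1/2}\pt_\alpha\bbs{(\det g)^{1/2}V^\alpha}$, which with $(V^1,V^2)=(f,g)$ is exactly the right-hand side above. As a consistency check on sign conventions, the same reasoning applied to $n=(1+|\nabla h|^2)^{-1/2}(-h_x,-h_y,1)$ gives $\nabla_s\cdot n=-\nabla\cdot\bbs{\nabla h/\sqrt{1+|\nabla h|^2}}$, so the two expressions for $H$ used in \eqref{eq-con10} and in the statement \eqref{eq-sur} agree; likewise $v\cdot n=v_n$ by \eqref{velocityNN}. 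Substituting the geometric identity back into the product-rule expansion turns \eqref{eq-con10} into $(\pt_t+v\cdot\nabla)\cc+\cc\,\nabla_s\cdot v_s+v_n\cc H=0$ on $S_t$, which is \eqref{eq-sur}.

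I expect the delicate point to be notational rather than analytic: one must keep $(f,g)$ as \emph{coordinate-basis} components of $v_s$ (the $\tau_i$ are not orthonormal, since $\tau_1\cdot\tau_2=h_xh_y$), so that the Voss--Weyl weight is $\sqrt{\det g}=\sqrt{1+|\nabla h|^2}$ and not $|\tau_1|$ or $|\tau_2|$, and one must track the orientation of $n$ so that $v_n\cc H$ enters with the same sign in both formulations. A reader who prefers not to quote the Voss--Weyl formula can instead verify the geometric identity directly, by extending $v_s$ to an ambient field $w$ and using $\nabla_s\cdot w=\operatorname{tr}\!\bbs{(I-n\otimes n)\nabla w}$, at the cost of a longer explicit computation; the argument is otherwise unchanged.
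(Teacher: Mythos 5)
Your proposal is correct, and its overall architecture matches the paper's proof in Appendix \ref{appA}: both arguments reduce the equivalence to the single geometric identity
$\nabla_s\cdot v_s=\tfrac{1}{\sqrt{1+|\nabla h|^2}}\,\nabla_{xy}\cdot\bbs{\sqrt{1+|\nabla h|^2}\,(f,g)^{\top}}$,
which is exactly the paper's claim \eqref{tm_second}, after identifying $(\pt_t+v\cdot\nabla)\cc$ with the planar material derivative via \eqref{rel_c} and matching the curvature and advection terms. The differences are two. First, you start from the non-conservative form \eqref{eq-con10} and expand the flux term by the product rule, whereas the paper starts from the conservative form \eqref{eq-con} and factors out $\sqrt{1+|\nabla h|^2}$ using the identity \eqref{MC_ID}; these are trivially interchangeable since \eqref{eq-con10} was obtained from \eqref{eq-con} by that same identity. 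Second, and more substantively, you prove the key identity intrinsically via the first fundamental form of the graph ($\det g=1+|\nabla h|^2$) and the Voss--Weyl formula $\nabla_s\cdot V=(\det g)^{-1/2}\pt_\alpha\bbs{(\det g)^{1/2}V^\alpha}$, while the paper extends $v_s=f\tau_1+g\tau_2$ to an ambient field and computes $\nabla\cdot v_s-n\cdot(n\cdot\nabla)v_s$ by hand --- precisely the alternative you mention in your last sentence. Your route is shorter and makes the role of the area element $\sqrt{1+|\nabla h|^2}$ conceptually transparent, at the cost of quoting a standard differential-geometric formula; the paper's route is self-contained but longer. Your cautionary remark that $(f,g)$ must be read as coordinate-basis components in the non-orthonormal frame $\{\tau_1,\tau_2\}$ is well taken and is in fact the correct reading of the paper's decomposition \eqref{velocity}, and your sign checks on $H=\nabla_s\cdot n$ and $v\cdot n=v_n$ close the remaining gaps.
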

\begin{rem}
By some elementary calculations, we remark 
\eqref{eq-sur} is equivalent to \cite[(2.13)]{Garcke_Wieland_2006} and also equivalent to \cite[(2.10)]{Lai_2010}.
Indeed, the tangential  convection can be combined with the last two terms in \eqref{eq-sur} in a conservative form, i.e.,
\begin{equation}
\begin{aligned}
0=&\pt_t \cc + v_n n \cdot \nabla \cc + v_s \cdot \nabla \cc+ \cc\nabla_s \cdot  v_s + v_n \cc H\\
=& \pt_t \cc + v_n n \cdot \nabla \cc + \nabla_s \cdot(\cc v_s) + v_n \cc H
\\
=& \pt_t \cc + v_n n \cdot \nabla \cc + \nabla_s \cdot(\cc v), \qquad \text{\cite[(2.13)]{Garcke_Wieland_2006}}.
\end{aligned}
\end{equation}
Notice also the last two terms can be combined together, so \eqref{eq-sur} is also equivalent to 
\begin{equation}
\begin{aligned}
0=&\pt_t \cc + v \cdot \nabla \cc + \cc\nabla_s \cdot  v_s +  v_n \cc H\\
=& (\pt_t  + v \cdot \nabla) \cc + \cc\nabla_s \cdot  v, \qquad  \text{\cite[(2.10)]{Lai_2010}}.
\end{aligned}
\end{equation}
{\blue We point out all these equivalent equations  differ  from ones presented in \textsc{Stone} \cite[(6)]{Stone_1990}, i.e.,
\begin{equation}
\pt_t \cc + \stkout{v_n n \cdot \nabla \cc} + \nabla_s \cdot(\cc v_s) + v_n \cc H=0.
\end{equation}
  The second term above $v_n n \cdot \nabla \cc$ vanishes   only  if concentration $\mathcal{C}$ has a  constant normal extension outside the moving surface \cite{cermelli2005transport}.} However, from  \textsc{Gurtin} \cite{gurtin1999configurational}, $(\pt_t  + v \cdot \nabla)$ is a tangential derivative of the space-time surface $\displaystyle \cup_{t\geq 0} S_t \times \{t\}$, so there is no need to extend $\cc$ outside the space-time surface.
\end{rem}

{\blue To describe the evolution of the capillary surface, we only need the normal velocity $v_n$ of the fluids. In the first special case, we consider the surfactant move with the evolving capillary surface, i.e., there is no additional tangential convection w.r.t. the capillary surface for the surfactant. We will call this special case, as ``no free-slip'' case. In this case, the continuity equation \eqref{eq-con10} can be completely described via $v_n$ and becomes 
\begin{equation}\label{eq_con18}
\begin{aligned}
\pt_t c  - v_n  \nabla \cdot \bbs{\frac{ c}{\sqrt{1+|\nabla h|^2}} \nabla h} = & \pt_t c- v_n \nabla c \cdot  \frac{ \nabla h}{\sqrt{1+|\nabla h|^2}}   +  v_n c H=0.
\end{aligned}
\end{equation}
This formula is particularly efficient for simulating the purely geometric motion of the droplet and the surfactant is pinned to move with the capillary surface. For the general case that the continuity equation is completely described via $v$, one needs to consider the fluids inside the droplets instead of the purely geometric motion; see Section \ref{sec_Mar}.
}

\subsubsection{Diffusion of  surfactant on the evolutionary surface}
Furthermore, from some elementary calculations, the Dirichlet energy for the surfactant on the capillary surface is
\begin{equation}
 \frac12 \int_{S_t} |\nabla_s \cc|^2 \ud s = \frac12 \int_{\Omega_t} \frac{1}{\sqrt{1+|\nabla h|^2}} \nabla c \cdot (M \nabla c) \ud x \ud y,
\end{equation}
where $M:
=I+\bbs{\begin{array}{c}
-h_y\\ h_x
\end{array}} \bbs{-h_y, h_x}$.
Then the variation of the Dirichlet energy gives  the Laplace–Beltrami operator in the graph representation,
$$
\Delta_s c := \frac{1}{\sqrt{1+|\nabla h|^2}} \nabla \cdot  \bbs{\frac{1}{\sqrt{1+|\nabla h|^2}}  M \nabla  c}.
$$
Thus in the ``no free-slip'' case, the continuity equation \eqref{eq_con18}  for the  surfactant  with additional diffusion becomes
\begin{equation}\label{3D_ceq}
c_t - \frac{h_t}{\sqrt{1+|\nabla h|^2}} \nabla \cdot \bbs{\frac{c \nabla h}{\sqrt{1+|\nabla h|^2}}}=D \Delta_s c,
\end{equation}
which is equivalent to
\begin{equation}\label{tm_mul}
\pt_t c- v_n \nabla c \cdot  \frac{ \nabla h}{\sqrt{1+|\nabla h|^2}}   +  v_n c H = D \Delta_s c.
\end{equation}
Here $D>0$ is a diffusion constant.
This is the continuity equation with diffusion for the surfactant dynamics on the moving surface and we will impose the no-flux boundary condition for \eqref{3D_ceq} below.
Another equivalent form of \eqref{3D_ceq} in the conservative form  is
\begin{equation}\label{3D_c_con}
 \pt_t \bbs{c {\sqrt{1+|\nabla h|^2}} } - \nabla \cdot \bbs{\frac{ c h_t}{\sqrt{1+|\nabla h|^2}} \nabla h}   =  D  \nabla \cdot  \bbs{\frac{1}{\sqrt{1+|\nabla h|^2}}  M \nabla  c}.
\end{equation}

\subsubsection{The rate of change of the energy  on the capillary surface}\label{sec_energy_s}
In this section, given an evolutionary capillary surface $h(x,y,t)$ and the associated surfactant dynamics \eqref{3D_ceq} with \eqref{3D_bc_c}, we calculate the rate of change of the energy   for the capillary surface.

Consider the free energy on the capillary surface
\begin{equation}
\F_0=\int_{\Omega_t}  e(c) \sqrt{1+ |\nabla h|^2} \ud x \ud y,
\end{equation}
where $e(c)$ is the energy density on the capillary surface and $c(x,y,t)$ satisfies \eqref{3D_ceq}.

However, in  calculations of the rate of change of the energy, the work done  by the  surface tension per unit time shall be a surfactant-dependent one, given by $\gamma(c) H v_n$, where $\gamma(c)$ is the effective surface tension and $H=-\nabla \cdot \bbs{\frac{\nabla h}{\sqrt{1+|\nabla h|^2}}}$ is the mean curvature.  We will derive this energy conservation law  below.

First, the   relation between the surfactant-dependent
surface tension $\gamma(c)$ with the free energy density  $e(c)$ is given by  \cite{Doi_2013, Garcke_Wieland_2006}  
\begin{equation}\label{gg}
\gamma(c)= e(c)-e'(c) c.
\end{equation}
Indeed, from \cite[(4.32)]{Doi_2013}, 
$\gamma(c) = \gamma_0 - \Pi_A(c)$. Here $\gamma_0=e(0)$ and $\Pi_A$ is the surface pressure which can be calculated by the osmotic pressure inside the interfacial layer due to the inhomogeneous surface concentration of surfactant. From the same derivations as  \cite[(2.23)]{Doi_2013}, 
$$ \Pi_A(c)= -e(c) + e'(c) c + e(0).$$
Thus we have \eqref{gg}.

Second, multiplying \eqref{tm_mul} by $e'(c)\sqrt{1+|\nabla h|^2}$, we have
\begin{align*}
\bbs{\pt_t e(c)}  \sqrt{1+|\nabla h|^2}- \nabla e(c) \cdot   \frac{ h_t \nabla h }{ \sqrt{1+|\nabla h|^2}} +h_t  H e'(c) c = D \sqrt{1+|\nabla h|^2}  e'(c) \Delta_s  c.
\end{align*}
We can recast this in the conservative form
\begin{equation}\label{CL_energy}
\begin{aligned}
&\pt_t \bbs{e(c)  \sqrt{1+|\nabla h|^2}}- \nabla \cdot \bbs{ e(c) \cdot   \frac{ h_t \nabla h }{ \sqrt{1+|\nabla h|^2}}} + h_t  H e'(c) c\\
&\qquad \qquad - e(c) \pt_t \bbs{ \sqrt{1+|\nabla h|^2}} + e(c) \nabla \cdot \bbs{ \frac{ h_t \nabla h }{ \sqrt{1+|\nabla h|^2}}  } 
 = D \sqrt{1+|\nabla h|^2}  e'(c) \Delta_s  c.
\end{aligned}
\end{equation}
Then using the identity 
\begin{equation}\label{MC_ID}
-\pt_t \bbs{ \sqrt{1+|\nabla h|^2}} +  \nabla \cdot \bbs{ \frac{ h_t \nabla h }{ \sqrt{1+|\nabla h|^2}}  }  = -h_t H 
\end{equation}
and relation \eqref{gg}, we simplify \eqref{CL_energy} as
\begin{equation}\label{CL_energy_ex}
\begin{aligned}
&\pt_t \bbs{e(c)  \sqrt{1+|\nabla h|^2}}- \nabla \cdot \bbs{ e(c) \cdot   \frac{ h_t \nabla h }{ \sqrt{1+|\nabla h|^2}}}  - \gamma(c)   Hh_t  
 = D \sqrt{1+|\nabla h|^2}  e'(c) \Delta_s  c.
\end{aligned}
\end{equation}
The first term in \eqref{CL_energy_ex} is the rate of change of the energy density per unit time per unit area in the $xy$-plane. The second term in \eqref{CL_energy_ex} is the flux of energy density. The third term in \eqref{CL_energy_ex}, i.e., $-\gamma(c) H h_t = -\gamma(c) H v_n \sqrt{1+|\nabla h|^2}$ is the rate of work done by the surface tension per unit area in the $xy$-plane. The last term in \eqref{CL_energy_ex} is the energy density  dissipation due to the diffusion of the surfactant.

Now we focus on our goal to derive the contact line dynamics which is driven by the effective surface tension $\gamma(c)$. {\blue However, notice the boundary condition can not be derived by taking trace of the interior velocity, so we will apply the Reynolds transport theorem using the boundary velocity of the moving domain. } In the next subsection, we will discuss  the correct boundary condition for the concentration of the surfactant.

The fundamental relation \eqref{gg}
 implies the decreasing of $\gamma(c)$ from the convexity of energy density $e(c)$. {\blue Many interesting physical phenomena can be explained by  the gradient of effective surface tension $\gamma(c)$ due to the gradient of surfactant concentration. For instance, the gradient of surface tension will drive the spreading of the droplet and the surfactant rapidly aggregates to the direction of lower concentration, which  forms Marangoni flow on the surface. Particularly, when the concentration of surfactant is above a critical micelle value, the surfactant-driven spreading will lead to fingering phenomenon, an unstable structure on surfactant-laden droplets \cite{Craster_Matar_2009}. }
We remark a typical $\gamma$, derived from the Langmuir equation, is given by 
\begin{equation}\label{Lang}
\gamma(c) = \gamma_0 + c_s kT \ln (1-\frac{c}{c_s}),
\quad e(c) = \gamma_0 + kT \bbs{(c_s -c)\ln(c_s-c) + c \ln c - c_s \ln c_s},
\end{equation}
where $\gamma_0$ is the surface tension without  the surfactant and $c_s$ is the saturated concentration \cite{Doi_2013}. We will demonstrate numerical examples using this typical $\gamma$.

\subsubsection{Impose a boundary condition at the contact line for the surfactant to preserve total mass}\label{sec_bc_3d}
To derive a boundary condition for the surfactant equation at the contact lines, we now  apply the Reynolds transport theorem for the whole wetting domain $\Omega_t$ up to its boundary. We shall be careful when using the boundary velocity in the Reynolds transport theorem because the boundary condition of a PDE can not be derived by taking trace of the interior equation.

Denote $n_{\ell}$ as the outer normal of the contact line $\pt \Omega_t$ in the $xy$-plane and $v_{\ell}$ as the velocity of the contact line. We have $n_\ell = -\frac{\nabla h}{|\nabla h|}$ on the contact line $\pt \Omega_t$. Denote the normal speed of the contact line as $v_{\cl} := v_\ell \cdot n_\ell$.   By the Reynolds transport theorem   we have
\begin{equation}\label{tm_mass}
\begin{aligned}
&\frac{\ud }{\ud t} \int_{\Omega_t} c \sqrt{1+|\nabla h|^2} \ud x \ud y =\int_{\Omega_t} \pt_t \bbs{c  \sqrt{1+|\nabla h|^2}  } \ud x \ud y + \int_{\pt\Omega_t} c  \sqrt{1+|\nabla h|^2}  v_\ell\cdot  n_\ell \ud s. 
\end{aligned}
\end{equation}
Then by \eqref{3D_c_con} and integration by parts, we obtain
\begin{equation}\label{2.24}
\int_{\Omega_t} \pt_t \bbs{c  \sqrt{1+|\nabla h|^2}  } \ud x \ud y = D \int_{\pt \Omega_t}   \frac{1}{\sqrt{1+|\nabla h|^2}}   M \nabla  c \cdot n_\ell \ud s+ \int_{\pt \Omega_t} \frac{c h_t}{\sqrt{1+|\nabla h|^2}} \nabla h  \cdot n_\ell  \ud  s
\end{equation}
Notice the definition of $M$ gives
\begin{equation}\label{M}
M \nabla c \cdot n_\ell = n_\ell \cdot \nabla c + n_\ell \cdot (-h_y, h_x) ~(-h_y, h_x) \cdot \nabla c = n_\ell \cdot \nabla c \quad  \text{ on }\pt \Omega_t. 
\end{equation}
Notice also the compatibility condition $\frac{\ud h(x(t), y(t), t)}{\ud t}=0$ on the contact line gives
\begin{equation}\label{com_21}
h_t = - \nabla h \cdot v_\ell.
\end{equation} 
Then from   $n_\ell = -\frac{\nabla h}{|\nabla h|}$ on  $\pt \Omega_t$,  \eqref{com_21}
becomes
\begin{equation}\label{com_22}
h_t =  |\nabla h| n_\ell \cdot v_\ell = |\nabla h| v_{\cl},
\end{equation}
which implies
\begin{equation}\label{useID1}
 \frac{ h_t}{\sqrt{1+|\nabla h|^2}} \nabla h  \cdot n_\ell =  -  \frac{|\nabla h|^2 }{\sqrt{1+|\nabla h|^2}}  v_{\cl}.
\end{equation}
Thus \eqref{2.24} can be further simplified as
\begin{align}
\int_{\Omega_t} \pt_t \bbs{c  \sqrt{1+|\nabla h|^2}  } \ud x \ud y 
 =  \int_{\pt \Omega_t} D \frac{n_\ell \cdot \nabla c  }{\sqrt{1+|\nabla h|^2}} -  c\frac{|\nabla h|^2 }{\sqrt{1+|\nabla h|^2}}  v_{\cl} \ud s.
\end{align}
Plugging this into \eqref{tm_mass}, we have
\begin{equation}\label{tm223}
\begin{aligned}
&\frac{\ud }{\ud t} \int_{\Omega_t} c \sqrt{1+|\nabla h|^2} \ud x \ud y\\
=&  \int_{\pt \Omega_t} \frac{D}{\sqrt{1+|\nabla h|^2}} n_\ell \cdot \nabla c - \int_{\pt \Omega_t} c\frac{|\nabla h|^2 }{\sqrt{1+|\nabla h|^2}}  v_{\cl}- c \sqrt{1+|\nabla h|^2} v_{\cl}  \ud s\\
=& \int_{\pt \Omega_t} \frac{D}{\sqrt{1+|\nabla h|^2}} n_\ell \cdot \nabla c + \int_{\pt \Omega_t} c\frac{1}{\sqrt{1+|\nabla h|^2}} v_{\cl}  \ud s.
\end{aligned}
\end{equation}

Therefore,
in order to maintain the mass conservation law
\begin{equation}
0=\frac{\ud }{\ud t} \int_{\Omega_t} c \sqrt{1+|\nabla h|^2} \ud x \ud y =   \int_{\pt \Omega_t}   \frac{1}{\sqrt{1+|\nabla h|^2}} \bbs{   D  \, n_\ell\cdot \nabla  c  +  c \, v_{\cl} }  \ud s,
\end{equation}
we impose the following Robin boundary condition for \eqref{3D_ceq}
\begin{equation}\label{3D_bc_c}
 D \, n_\ell \cdot \nabla c + c\,  v_{\cl} = 0 \quad \text{ on }\pt \Omega_t.
\end{equation}

\subsubsection{Rate of change of the total surface energy}
Using the  Reynolds transport theorem for the surface energy
\begin{equation}\label{D_energy1}
\begin{aligned}
 &\frac{\ud}{\ud t}\int_{\Omega_t} e(c) \sqrt{1+|\nabla h|^2} \ud x \ud y \\
 =& \int_{\Omega_t} \pt_t \bbs{e(c) \sqrt{1+|\nabla h|^2} }  \ud x \ud y + \int_{\pt \Omega_t}e(c) \sqrt{1+|\nabla h|^2} v_{\cl} \ud s\\
= & \int_{\Omega_t}    \nabla \cdot \bbs{ e(c) \cdot   \frac{ h_t \nabla h }{ \sqrt{1+|\nabla h|^2}}}  + \gamma(c)   Hh_t +D e'(c)  \nabla \cdot \bbs{\frac{1}{\sqrt{1+|\nabla h|^2}}   M \nabla c }  \ud x \ud y \\
& + \int_{\pt \Omega_t}e(c) \sqrt{1+|\nabla h|^2} v_{\cl} \ud s ,
 \end{aligned}
 \end{equation}
 where we used \eqref{CL_energy_ex} in the last equality.
 Then using the integration by parts, \eqref{D_energy1} becomes
 \begin{equation}\label{D_energy11}
\begin{aligned}
 &\frac{\ud}{\ud t}\int_{\Omega_t} e(c) \sqrt{1+|\nabla h|^2} \ud x \ud y \\
 =& \int_{\Omega_t} h_t H \gamma(c)   -D \frac{e''(c) }{\sqrt{1+|\nabla h|^2}}   \nabla c \cdot M \nabla  c \ud x \ud y\\
 &+  \int_{\pt \Omega_t} e(c)\frac{ h_{t}}{\sqrt{1+|\nabla h|^2}} \nabla h \cdot n_\ell + D e'(c) \frac{M \nabla c \cdot n_{\ell}}{\sqrt{1+|\nabla h|^2}} \ud s 
& + \int_{\pt \Omega_t}e(c) \sqrt{1+|\nabla h|^2} v_{\cl} \ud s.
 \end{aligned}
 \end{equation}
 Then using \eqref{useID1} and \eqref{M}, by the same calculations as \eqref{tm223}, we have
  \begin{equation}\label{D_energy2}
\begin{aligned}
 &\frac{\ud}{\ud t}\int_{\Omega_t} e(c) \sqrt{1+|\nabla h|^2} \ud x \ud y\\
  =&\int_{\Omega_t} h_t \gamma(c) H - D \frac{e''(c) }{\sqrt{1+|\nabla h|^2}}   \nabla c \cdot M \nabla  c \ud x \ud y
 &    + \int_{\pt \Omega_t} \frac{1}{\sqrt{1+|\nabla h|^2}}  [D  e'(c)   \, n_\ell \cdot  \nabla  c +  e(c) v_{\cl} ]  \ud s,
 \end{aligned}
 \end{equation}

From the boundary condition \eqref{3D_bc_c}, we have
\begin{equation}
D  e'(c)    \, n_\ell \cdot  \nabla  c +  e(c) v_{\cl} = - e'(c)  c \,  v_{\cl} +  e(c) v_{\cl} = \gamma(c) v_{\cl}.
\end{equation}
Thus, this, together with \eqref{D_energy2}, implies  the rate of change of the surface energy $\F_0$
\begin{equation}\label{3Denergy_D}
\begin{aligned}
 &\frac{\ud}{\ud t}\int_{\Omega_t} e(c) \sqrt{1+|\nabla h|^2} \ud x \ud y \\
 =& \int_{\Omega_t} \gamma(c)  h_t H -D \frac{e''(c) }{\sqrt{1+|\nabla h|^2}}   \nabla c \cdot M \nabla  c \ud x \ud y  + \int_{\pt \Omega_t} \gamma(c)\cos \theta_{\cl} \,  v_{\cl}  \ud s,
 \end{aligned}
\end{equation}
where we used $\cos \theta_{\cl}  = \frac{1}{\sqrt{1+|\nabla h|^2}}$ on $\pt \Omega_t$.
 
 From \eqref{3Denergy_D} and the rate of change of the surface energy for the bottom part
$$
  \frac{\ud}{\ud t} (\gamma_{\nsl}-\gamma_{\nsg}) \int_{\Omega_t} \ud x \ud y =  (\gamma_{\nsl}-\gamma_{\nsg}) \int_{\pt \Omega_t}v_{\cl} \ud s,
$$
we finally obtain the rate of change of  the total surface energy
\begin{equation}\label{SurfaceE}
\begin{aligned}
 &\frac{\ud}{\ud t}\bbs{\int_{\Omega_t} e(c) \sqrt{1+|\nabla h|^2} \ud x \ud y + (\gamma_{\nsl}-\gamma_{\nsg}) \int_{\Omega_t} \ud x \ud y} \\
 =& \int_{\Omega_t} \gamma(c)  h_t H -D \frac{e''(c) }{\sqrt{1+|\nabla h|^2}}   \nabla c \cdot M \nabla  c \ud x \ud y  + \int_{\pt \Omega_t} \bbs{ \gamma(c)\cos \theta_{\cl} +   \gamma_{\nsl}-\gamma_{\nsg}}   v_{\cl}  \ud s.
 \end{aligned}
\end{equation}

With the volume constraint $V$, we take the total free energy  of the droplet  as
\begin{equation}\label{3Denergy}
\F(h(t), \Omega_t , \lambda(t))= \int_{\Omega_t}  e(c) \sqrt{1+ |\nabla u|^2} \ud x \ud y + (\gamma_{\nsl}-\gamma_{\nsg}) \int_{\Omega_t} \ud x \ud y - \lambda(t) \bbs{ \int_{\Omega_t} h \ud x \ud y - V} ,
\end{equation}
where $\lambda(t)
$ is a Lagrangian multiplier.
Thus, given $h(x,y,t)$ and $c(c,y,t)$, the rate of change of the total free energy can be regarded as a functional of $h_t, v_{\cl}$. Denote
\begin{equation}\label{energyD_sec2}
\begin{aligned}
 \frac{\ud}{\ud t} \F (h_t, v_{\cl}; h, c) := \frac{\ud}{\ud t} \F  
    =& -  \int_{\Omega_t}  \bbs{ -\gamma(c) H + \lambda } h_t \ud x \ud y
 - D \int_{\Omega_t}   \frac{e''(c) }{\sqrt{1+|\nabla h|^2}} \nabla c \cdot    M \nabla  c \ud x \ud y \\
    &+ \int_{\pt \Omega_t}\bbs{\gamma(c) \cos \theta_{\cl}  + (\gamma_{\nsl}-\gamma_{\nsg}) }v_{\cl} \ud s.
\end{aligned}
\end{equation}
The derivation with some additional potential forces is standard and  will not be included  here.

\subsection{The Onsager principle and the governing equations in 3D}\label{sec2.2_E}
{\blue In Section \ref{sec2.1_K}, we used the normal velocity $v_n$ and the contact line speed $v_{\cl}$ to give kinematic descriptions including (i) the motion of the capilary surface \eqref{velocityNN}, (ii) the continuity equation of the surfactant \eqref{tm_mul} and (iii) the rate of change of total energy \eqref{energyD_sec2}. Now we determine these velocities $v_n, v_{\cl}$ from energetic considerations via Onsager's principle.}

From energetic considerations, we choose the following Rayleigh dissipation functional
\begin{equation}\label{RayQ}
Q(h_t, v_{\cl}; h, c):= \frac{\beta}2\int_{\Omega_t} \frac{ h_t^2}{\sqrt{1+|\nabla h |^2}} \ud x \ud y + \frac{\ssr }{2} \int_{\pt \Omega_t} |v_{\cl}|^2 \ud s +\frac{D}{2} \int_{\Omega_t}   \frac{e''(c) }{\sqrt{1+|\nabla h|^2}} \nabla c \cdot    M \nabla  c \ud x \ud y,
\end{equation}
 where $\beta$ represents the friction coefficient for the normal motion of the capillary surface, $\ssr $ represents the friction coefficient for the moving contact lines and the last term represents the dissipation due to the diffusion of the surfactant. We will see the first term in \eqref{RayQ} leads to the motion by mean curvature of the capillary surface \cite{Grunewald_Kim_2011}.
Then minimizing the Rayleighian \cite{Doi_2013} 
\begin{equation}\label{ray}
\Ray (h_t,v_{\cl}; h, c):= Q(h_t, v_{\cl}; h, c)+   \frac{\ud}{\ud t} \F (h_t, v_{\cl}; h, c) 
\end{equation}
with respect to $(h_t, v_{\cl})$ gives the governing equation
\begin{equation}\label{mmc}
\begin{aligned}
 & \frac{\beta}{\sqrt{1+|\nabla h|^2}} h_t=- \gamma(c) H + \lambda,\\
&  \ssr  v_{\cl}=-\gamma(c)\cos \theta_{\cl}  - (\gamma_{\nsl}-\gamma_{\nsg}),
\end{aligned}
\end{equation}
where the right hand side $F_s = -\gamma(c)\cos \theta_{\cl}  - (\gamma_{\nsl}-\gamma_{\nsg})$ is exactly the surfactant-dependent unbalanced Young force.

Combining this with \eqref{3D_ceq}, the full system is
\begin{equation}\label{3Dfull}
\left\{
\begin{aligned}
 & \frac{\beta}{\sqrt{1+|\nabla h|^2}} h_t= -\gamma(c) H + \lambda, \quad  h(x,y)|_{\pt \Omega_t} =0,\\
 &  c_t - \frac{h_t}{\sqrt{1+|\nabla h|^2}} \nabla \cdot \bbs{\frac{c \nabla h}{\sqrt{1+|\nabla h|^2}}}=D \Delta_s c, \quad   c v_{\cl} + D \,  n_\ell  \cdot \nabla c  \big|_{\pt \Omega_t} =0,\\
&  \ssr  v_{\cl}=-\gamma(c)\cos \theta_{\cl}   - (\gamma_{\nsl}-\gamma_{\nsg}), \quad \text{ on } \pt \Omega_t,\\
 & \int_{\Omega_t} h \ud x \ud y = V.
\end{aligned}
\right.
\end{equation}
This system can be regarded as (i) the linear response relation of $v_n$ to the Laplace pressure $\gamma(c) H$, (ii) the linear response relation of $v_{\cl}$ to the   surfactant-dependent unbalanced Young force $F_s$, and (iii) the transport of the insoluble surfactant on the capillary surface.

As a consequence, the energy dissipation relation is
\begin{equation}\label{ED}
\begin{aligned}
  \frac{\ud}{\ud t} \F  
    =& -  \int_{\Omega_t}  \frac{\beta h_t^2}{\sqrt{1+|\nabla h|^2}} \ud x \ud y
 - D \int_{\Omega_t}   \frac{e''(c) }{\sqrt{1+|\nabla h|^2}}  \nabla c \cdot   M \nabla  c \ud x \ud y - \int_{\pt \Omega_t}\ssr |v_{\cl}|^2 \ud s
 \\
 =& -2 Q.
   \end{aligned}
\end{equation}
In physics, $\frac{2Q}{T}$ is denoted as $\dot{S}$, the entropy production rate. 

We point out that the first dissipation term in Rayleigh dissipation functional \eqref{RayQ} is not a standard one.  Instead, the standard dissipation functional includes the dissipation due to the viscosity of fluids inside the droplet, for which we will also give a simple derivation using Onsager's principle in Section \ref{sec_Mar}. However, our choice of the Rayleigh dissipation functional \eqref{RayQ} allows us to 
 study the purely geometric motion of the droplets and has the following advantages.  (i) For small droplets, it captures the essential physics and  the leading behaviors of the droplet dynamics; (ii) it  satisfies the Onsager principle in physics and thus has a gradient flow structure so that this simplified model is  friendly for theoretical studies; (iii) this model is also computationally efficient because it does not need to compute the fluids inside the droplets and the numerical schemes can be easily adapt to more complicated physical examples such as inclined textured substrates, the electrowetting and the surfactant dynamics considered here. 

%%%%%%%%%%%%%%%%%%%%%%%%%%%%%%%
%%%%%%%%%%%%%%%%%%%%%%%%%%%%%%%

\section{Surfactant-induced Marangoni stress and viscous flow}\label{sec_Mar}
In this section, we derive the surfactant-induced Marangoni flow for  droplets on a substrate by including the viscous bulk fluids inside the droplets. 
Including the fluid viscosity dissipation in the Rayleigh dissipation functional, instead of the  first term in \eqref{RayQ}, will lead to the  Stokes equations for fluids inside droplets  \cite{Karapetsas_Craster_Matar_2011, Zhang_Xu_Ren_2014, Karapetsas_Sahu_Matar_2016} or the  thin film equation in the lubrication approximation \cite{Karapetsas_Sahu_Matar_2016, xu2016variational}. Although different forms of viscous flow models coupled with moving contact lines and surfactant transport were derived previously, we  adapt the energy law on the capillary surface \eqref{CL_energy_ex} to the general case, i.e., the surfactant moves on the capillary surface along with both the normal velocity $v_n$ and tangential velocity $v_s$ of the fluids,  and then use Onsager's principle to give a simple derivation for the viscous bulk fluids inside the droplet coupled with the Marangoni flow induced by the surfactant. {\blue We will see in \eqref{M_stress} and \eqref{eq_Mar} that the variation of the total surface energy with a surfactant-dependent surface tension will exerts an additional force $\nabla_s \gamma(c)$
for the bulk fluids at the capillary surface $S_t$. This surface gradient of the effective surface tension induces a Marangoni flow, and thus this phenomena is called Marangoni effect.}
At the end of this section, we also point out the two cases with or without bulk fluids inside the droplet are indeed quite similar in terms of the linear response relation $u=\mathcal{K}F$; see \eqref{linearRes}.
In Proposition \ref{prop_korn}, we will prove the Rayleigh dissipation functional for the viscous flow case is indeed stronger than
the one for the motion by mean curvature of the capillary surface.
 However,  the bulk fluids cases with {\blue both the hydrodynamic effect of the viscous bulk fluids inside the droplet and the surfactant effect on the moving surface,   i.e., the case that the dynamics of the bulk fluids is described by the Stokes equation  coupled with the advection-diffusion of the surfactant on the moving capillary surface (see \eqref{eq_Mar}),} require additional computations for bulk fluids. Thus   the  computational strategies presented in Section \ref{sec_num} shall be modified and will be left  as a  future research.

{\blue Now let us first state the idea of  derivations for the governing equations \eqref{eq_Mar} for the general ``free-slip'' case, i.e., the surfactant moves on the capillary surface along with both the normal velocity $v_n$ and tangential velocity $v_s$ of the fluids inside the droplet. For this general case,  given an underlying velocity $v$,  the kinematic description for the capillary surface is still \eqref{eve}, i.e., $h_t + v_1 h_x + v_2 h_y = v_3$, and the continuity equation for the surfactant becomes \eqref{ceq_conven}.
Below,  we will follow the procedures in Section \ref{sec_2} to first give the kinematic descriptions for the rate of change of total energy in Section \ref{Sec3_k}. After the calculations for the rate of change of total energy $\dot{\F}$ in \eqref{final_rate}, it quantifies the work done by the open system (capillary surface laid by surfactant and its contact line) against friction \cite{goldstein2002classical}. Then we determine the velocity fields, including fluid velocity $v$ and contact line speed $v_{\cl}$, via energy considerations in Section \ref{sec3_EE}, i.e.,  via Onsager's principle by  introducing a new Rayleigh dissipation functional $Q$. These immediately yield the governing equations and the energy dissipation law for the bulk fluids coupled with transport of surfactant on the evolutionary surface; see \eqref{eq_Mar}. In the end of this section, we give  a proof of Onsager's reciprocal relation and obtain a lower bound estimate for energy dissipation in Proposition \ref{prop_korn}, which helps us to characterize the steady solution as a spherical cap laid by constant-concentrated surfactant.

\subsection{Kinematic descriptions for the rate of change of total energy}\label{Sec3_k}
In the following subsections,  we will adapt the same kinematic descriptions as in Section \ref{sec_2} to the general ``free-slip'' case for surfactant with $v_s$, derive the corresponding boundary conditions of the surfactant at the contact line and compute the rate of change of the total surface  energy. With the convection contribution, transport equation \eqref{3D_c_con} for $c$ becomes 
\begin{equation}\label{ceq_conven}
 \pt_t \bbs{c {\sqrt{1+|\nabla h|^2}} } - \nabla \cdot \bbs{\frac{ c h_t}{\sqrt{1+|\nabla h|^2}} \nabla h}  +  \nabla \cdot \bbs{c \sqrt{1+|\nabla h|^2} \left( \begin{array}{c}
f\\
g
\end{array} \right) }  =  D  \nabla \cdot  \bbs{\frac{1}{\sqrt{1+|\nabla h|^2}}  M \nabla  c}.
\end{equation}
We impose a non-flux boundary condition \begin{equation}\label{bc-noflux}
n_{\ell} \cdot \nabla c \big|_{\pt \Omega_t}=0
\end{equation} 
 for $c$ to preserve the total mass of the surfactant. }
\subsubsection{Conservation of the total mass of the surfactant}
Using \eqref{ceq_conven} and the same calculus derivations as  \eqref{tm223} except for adding $(f,g)$, we have
\begin{equation}\label{final1}
\begin{aligned}
&\frac{\ud }{\ud t} \int_{\Omega_t} c \sqrt{1+|\nabla h|^2} \ud x \ud y\\
=& \int_{\pt \Omega_t} \frac{D}{\sqrt{1+|\nabla h|^2}} n_\ell \cdot \nabla c + \int_{\pt \Omega_t} c\frac{1}{\sqrt{1+|\nabla h|^2}} v_{\cl} - c \sqrt{1+|\nabla h|^2} \left( \begin{array}{c}
f\\
g
\end{array} \right) \cdot n_{\ell}  \ud s.
\end{aligned}
\end{equation}
The first term on the right-hand-side vanishes due to the no-flux boundary condition \eqref{bc-noflux}. 
Recall \eqref{velocity}.
Suppose we impose the non-penetration boundary condition for the bulk fluid velocity $v \cdot e_z|_{\pt \Omega_t} = 0$. Then
\begin{equation}
v \cdot e_z = \frac{v_n}{\sqrt{1+|\nabla h|^2}} + h_x f + h_y g   = 0 \quad  \text{ on }\pt \Omega_t. 
\end{equation}
Then by $n_\ell= -\frac{\nabla h}{|\nabla h|}$, we have
\begin{equation}
\frac{v_n}{\sqrt{1+|\nabla h|^2}} = |\nabla h|\left( \begin{array}{c}
f\\
g
\end{array} \right) \cdot n_{\ell}.
\end{equation}
From this and the continuity condition \eqref{com_22}, we know  the last two terms in \eqref{final1} also vanish
\begin{equation}\label{temp255}
\frac{1}{\sqrt{1+|\nabla h|^2}} v_{\cl} -  \sqrt{1+|\nabla h|^2} \left( \begin{array}{c}
f\\
g
\end{array} \right) \cdot n_{\ell} =0.
\end{equation}
Hence \eqref{final1} yields $\frac{\ud }{\ud t} \int_{\Omega_t} c \sqrt{1+|\nabla h|^2} \ud x \ud y=0.$

%%%%%%%%%%%%%%%%%%%%%%%%%%%%%%%%%
%%%%%%%%%%%%%%%%%%%%%%%%%%%%%%%%%
\subsubsection{Calculations of Marangoni stress  induced by the tangential convection of surfactant}
Recall $(v_1, v_2)(x,y,t)$ is the $xy$-component of the velocity of the moving capillary surface. Then \eqref{ceq_conven} is recast as
\begin{equation}\label{tm_c31_con}
 \pt_t \bbs{c {\sqrt{1+|\nabla h|^2}} }  +  \nabla \cdot \bbs{c \sqrt{1+|\nabla h|^2}  \left( \begin{array}{c}
v_1\\
v_2
\end{array} \right) }  = D  \nabla \cdot  \bbs{\frac{1}{\sqrt{1+|\nabla h|^2}}  M \nabla  c}.
\end{equation}
Then by same calculations as \eqref{CL_energy_ex}, we obtain the change of the surface energy 
\begin{equation}\label{energy_law_33}
\begin{aligned}
 &\pt_t \bbs{e(c){\sqrt{1+|\nabla h|^2}} }  +  \nabla \cdot \bbs{e(c) \sqrt{1+|\nabla h|^2}  \left( \begin{array}{c}
v_1\\
v_2
\end{array} \right)}\\
   =&  \gamma(c)
 \bbs{ \pt_t {\sqrt{1+|\nabla h|^2}}  +  \nabla \cdot \bbs{ \sqrt{1+|\nabla h|^2}  \left( \begin{array}{c}
v_1\\
v_2
\end{array} \right) } } + D \sqrt{1+|\nabla h|^2}  e'(c) \Delta_s  c\\
 =:& \gamma(c)I + D \sqrt{1+|\nabla h|^2}  e'(c) \Delta_s  c.
 \end{aligned}
\end{equation}
Here using the identity \eqref{MC_ID}, $\gamma(c) I$ can be further simplified as
\begin{align}
\gamma(c)I =&   \gamma(c) h_t H 
- \nabla \gamma(c)  \cdot \bbs{  \sqrt{1+|\nabla h|^2} \left( \begin{array}{c} f\\g\end{array} \right) }
+  \nabla \cdot \bbs{  \gamma(c) \sqrt{1+|\nabla h|^2} \left( \begin{array}{c} f\\g\end{array} \right) }\nonumber\\
=&  - \sqrt{1+|\nabla h|^2}  \bbs{-
 \gamma(c) v_n H + \left( \begin{array}{c} f\\g\end{array} \right) \cdot \nabla \gamma(c) }
+  \nabla \cdot \bbs{  \gamma(c) \sqrt{1+|\nabla h|^2} \left( \begin{array}{c} f\\g\end{array} \right) }. \label{tmgamma}
\end{align}

{\blue Now we simplify the first term in $\gamma(c) I$ to see that besides the Laplace pressure, there is an additional force, brought by the surface gradient of $\gamma(\cc)$.
Recall 
$v = v_n n + v_s.$
 Denote 
\begin{equation}\label{def_FF}
F := -\gamma(c) H n + \nabla_s \gamma(\cc).
\end{equation}
Here the additional force term, surface gradient $\nabla_s \gamma(\cc)$, is called the Marangoni stress,  so the total  capillary force density $F$ exerting to the environment consists of both the negative Laplace pressure done  by the capillary surface  and the Marangoni stress caused by the surface gradient of  surfactant-dependent surface tension.}
We now prove the first term in $\gamma(c) I$ satisfies 
\begin{equation}\label{claimG}
 -\gamma(c) v_n H + \left( \begin{array}{c} f\\g\end{array} \right) \cdot \nabla_{xy} \gamma(c) = v\cdot F.
\end{equation}
{\blue In other words, the first term in \eqref{tmgamma} can be recast as the work done by the capillary surface per unit time $v\cdot F$. }
Indeed, from the orthogonality and definition of $F$ in \eqref{def_FF}, we have
\begin{equation}\label{M_stress}
v \cdot F = -v_n \gamma(c) H + v_s \cdot \nabla_s \gamma(\cc)\quad  \text{ on } z=h(x,y,t).
\end{equation}
By the definition of $v_s$ and surface gradient, the last term is
\begin{align*}
v_s \cdot \nabla_s \gamma(\cc) =& (f \tau_1 + g \tau_2) \cdot [(I-n\otimes n) \nabla \gamma(\cc)] \\
=& (f \tau_1 + g \tau_2) \cdot \nabla \gamma(\cc) =   \left( \begin{array}{c} f\\g\end{array} \right) \cdot \nabla_{xy} \gamma(c).
\end{align*}

\subsubsection{Rate of change of total surface energy}

Recall the calculations for energy change  \eqref{energy_law_33}  and its relation with the Marangoni stress $F$ in \eqref{claimG}.  Using the same calculations as \eqref{D_energy11}, we can simplify the second  term in  \eqref{tmgamma}
 \begin{equation}
\begin{aligned}
 &\frac{\ud}{\ud t}\int_{\Omega_t} e(c) \sqrt{1+|\nabla h|^2} \ud x \ud y \\
 =& - \int_{\Omega_t} v \cdot F \sqrt{1+|\nabla h|^2}  \ud x \ud y   - \int_{\Omega_t}D \frac{e''(c) }{\sqrt{1+|\nabla h|^2}}   \nabla c \cdot M \nabla  c \ud x \ud y\\
 &+  \int_{\pt \Omega_t} e(c)\frac{ h_{t}}{\sqrt{1+|\nabla h|^2}} \nabla h \cdot n_\ell  - c e'(c) \sqrt{1+|\nabla h|^2}  \left( \begin{array}{c} f\\g\end{array} \right) \cdot n_\ell     \ud s 
 + \int_{\pt \Omega_t}e(c) \sqrt{1+|\nabla h|^2} v_{\cl} \ud s,
 \end{aligned}
 \end{equation}
 where we used the no-flux boundary condition \eqref{bc-noflux}.
 Using \eqref{temp255} and \eqref{useID1}, 
  \begin{equation}
\begin{aligned}
 &\frac{\ud}{\ud t}\int_{\Omega_t} e(c) \sqrt{1+|\nabla h|^2} \ud x \ud y \\
 =& - \int_{\Omega_t} v \cdot F \sqrt{1+|\nabla h|^2}  \ud x \ud y   - \int_{\Omega_t}D \frac{e''(c) }{\sqrt{1+|\nabla h|^2}}   \nabla c \cdot M \nabla  c \ud x \ud y\\
 &+  \int_{\pt \Omega_t} -e(c)\frac{ |\nabla h|^2}{\sqrt{1+|\nabla h|^2}} v_{\cl}  - c e'(c) \frac{1}{\sqrt{1+|\nabla h|^2} } v_{\cl}     \ud s 
 + \int_{\pt \Omega_t}e(c) \sqrt{1+|\nabla h|^2} v_{\cl} \ud s\\=& - \int_{\Omega_t} v \cdot F \sqrt{1+|\nabla h|^2}  \ud x \ud y   - \int_{\Omega_t}D \frac{e''(c) }{\sqrt{1+|\nabla h|^2}}   \nabla c \cdot M \nabla  c \ud x \ud y +  \int_{\pt \Omega_t}\gamma(c) v_{\cl} \frac{1}{\sqrt{1+|\nabla h|^2} }  \ud s. 
 \end{aligned}
 \end{equation}
 
 Therefore in the presence of convention contribution of surfactant, the rate of change of the total surface energy becomes
 \begin{equation}\label{SurfaceE_con}
\begin{aligned}
 &\frac{\ud}{\ud t} \Big[ \int_{\Omega_t} e(c) \sqrt{1+|\nabla h|^2} \ud x \ud y + (\gamma_{\nsl}-\gamma_{\nsg}) \int_{\Omega_t} \ud x \ud y \Big] \\
 =& - \int_{\Omega_t} v \cdot F \sqrt{1+|\nabla h|^2}  \ud x \ud y   - \int_{\Omega_t}D \frac{e''(c) }{\sqrt{1+|\nabla h|^2}}   \nabla c \cdot M \nabla  c \ud x \ud y - \int_{\pt \Omega_t} F_s  v_{\cl}  \ud s,
 \end{aligned}
\end{equation}
where $F_s =  -\gamma(c)\cos \theta_{\cl} +  \gamma_{\nsg} - \gamma_{\nsl}$ is the surfactant-dependent unbalanced Young force.

\subsection{ Energetic descriptions: Stokes flow for bulk fluids and governing equations derived by Onsager's principle}\label{sec3_EE}

{\blue As we used the fluid velocity $v$ and the contact line speed $v_{\cl}$ to give kinematic descriptions including  motion of the capilary surface, the continuity equation of the surfactant   and the rate of change of total energy \eqref{SurfaceE_con}, it remains to   determine these velocities $v, v_{\cl}$ from energetic considerations.}
Using Onsager's principle and a new Rayleigh dissipation functional, we give   derivations for the governing equations of the surfactant induced Marangoni flow  coupled with moving contact lines.

{\blue Let $u(x,y,z,t)$ be the velocity of the bulk fluids.}
 Assume the velocity $v$ of the capillary surface coincides with the velocity $u$ of the bulk fluids  {\blue restricted on the capillary surface, i.e., $v(x,y,t)=u(x,y,z,t)\big|_{z=h(x,y,t)}$}.

First, we  impose the non-penetration boundary condition for the bottom of the
droplet
\begin{equation}\label{non-p}
u \cdot n = 0  \quad \text{ on } \Omega_t
\end{equation}
and consider an incompressible fluid satisfying $\nabla \cdot u = 0$ inside the droplet $A_t$.

Given $h(x,y,t)$ and $c(x,y,t)$, then
the rate of change of the total surface energy  \eqref{SurfaceE_con} can be regarded as a
linear functional of $u, v_{\cl}$ and we denote it as
 \begin{equation}\label{final_rate}
\begin{aligned}
\dot{\F}(u, v_{\cl}; h, c) := - \int_{\Omega_t} u \cdot F \sqrt{1+|\nabla h|^2}  \ud x \ud y   - \int_{\Omega_t}D \frac{e''(c) }{\sqrt{1+|\nabla h|^2}}   \nabla c \cdot M \nabla  c \ud x \ud y - \int_{\pt \Omega_t} F_s  v_{\cl}  \ud s.
 \end{aligned}
\end{equation}
{\blue We remark 
 the rate of change of total energy $\dot{\F}$ in \eqref{final_rate} quantifies the work done by the open system (capillary surface laid by surfactant and its contact line) against friction \cite{goldstein2002classical}. Then all we need to do is to determine the the fluid velocity $u$ and the contact line speed $v_{\cl}$ via  Onsager's principle by  introducing a proper Rayleigh dissipation functional $Q$.
In general, the choice of the Rayleigh dissipation functional in Onsager's principle is just from phenomenological modeling instead of from physical principle. Therefore, Onsager's principle is valid only for certain class of problems. However, many specific problems in soft matter including diffusion, viscous fluids and surfactant belong to this class. The resulting governing equations determined by the choice of Rayleighian  are consistent with some well-accepted or experimentally tested models, for instance, the Navier-Stokes equations and the stokes equations in the current section; see more worked out examples in \textsc{Doi}'s book \cite{doi2013soft}.}

Second, given $h(x,y,t)$ and $c(x,y,t)$, introduce 
the Rayleigh dissipation functional $Q$
\begin{equation}\label{Ray_stokes}
\begin{aligned}
Q(u,v_{\cl}; h,c):=  \frac{\mu}{4} \int_{A_t} (\nabla u + \nabla u^{\top}) : (\nabla u + \nabla u^{\top})
 \ud V
 &+\frac{D}2 \int_{\Omega_t}   \frac{e''(c) }{\sqrt{1+|\nabla h|^2}}  \nabla c \cdot   M \nabla  c \ud x \ud y\\ 
 &+ \frac{\mu}{2\alpha}\int_{\Omega_t}  u^2 \ud x \ud y  
 + \frac{\ssr  }2 \int_{\pt \Omega_t}|v_{\cl}|^2 \ud s,
 \end{aligned}
\end{equation}
where $\mu$ is the dynamics viscosity for the bulk fluids and $\alpha$ is the slip length.

Given $h(x,y,t)$ and $c(x,y,t)$, define Rayleighnian as
\begin{equation}
\Ray(u, v_{\cl}; h,c) := Q(u, v_{\cl}; h, c) + \dot{\F}(u, v_{\cl}; h, c).
\end{equation}
Then based on Onsager's principle,  we minimize $\Ray(u, v_{\cl}; h,c)$ w.r.t the velocity $u, v_{\cl}$. This  yields the following governing equations, whose derivations are given in three steps below.

After incorporating the transport  equation \eqref{ceq_conven} for surfactant $c$ and the no-flux boundary condition \eqref{bc-noflux}, the minimization of $\Ray(u, v_{\cl}; h,c)$  gives the governing equations
\begin{equation}\label{eq_Mar}
\begin{aligned}
&\left\{ \begin{array}{cc}
  \nabla p=\mu \Delta u  & \text{ in } A_t,\\
\nabla \cdot u = 0 &  \text{ in } A_t, \\
\sigma n = F & \text{ on } S_t,\\
  \frac{\alpha}{\mu} \tau \cdot \sigma n + \tau \cdot u = 0, \quad u \cdot n = 0 & \text{ on } \Omega_t,
\end{array}
\right. \\
&\left\{ \begin{array}{cc}
\pt_t h +  u_1 \pt_x h +  u_2 \pt_y h = u_3 & \text{ on } \Omega_t, \\
h = 0 & \text{ on }\pt\Omega_t\\
\ssr  v_{\cl} = F_s & \text{ on }\pt \Omega_t,
\end{array}
\right. \\
&\left\{ \begin{array}{cc}
\pt_t  \bbs{c \sqrt{1+ |\nabla_{xy} h|^2}} + \nabla_{xy} \cdot \bbs{c \sqrt{1+|\nabla_{xy} h|^2} \left( \begin{array}{c}
u_1\\
u_2
\end{array} \right) }=D  \nabla_{xy} \cdot  \bbs{\frac{1}{\sqrt{1+|\nabla_{xy} h|^2}}  M \nabla_{xy} c}  & \text{ on } \Omega_t,\\
\nabla_{xy} c \cdot  n_{\ell} = 0 & \text{on } \pt \Omega_t,
\end{array}
\right.
\end{aligned}
\end{equation}
where $u_1(x,y,t) =u_x(x,y,h(x,y,t),t),\ u_2(x,y,t) =u_y(x,y,h(x,y,t),t)$, $u_3(x,y,t) =u_z(x,y,h(x,y,t),t)$ and
\begin{equation}\label{notation1}
\sigma = -pI + \mu (\nabla u + \nabla u^\top), \quad F = -\gamma(c) H n + \nabla_s \gamma(\cc),  \quad F_s =  -\gamma(c)\cos \theta_{\cl} +  \gamma_{\nsg} - \gamma_{\nsl}.
\end{equation}
The first group of \eqref{eq_Mar} is the stationary Stokes equation inside the droplet coupled with the traction boundary condition balanced with the total capillary force density $F$ and Navier slip boundary condition at the bottom. The second group of \eqref{eq_Mar} is the evolution of the capillary surface induced by the fluid velocity and the moving contact line as a linear response to the surfactant-dependent unbalanced Young force $F_s$. The third group of \eqref{eq_Mar} is the transport equation for the insoluble surfactant with the no-flux boundary condition on the contact line $\pt \Omega_t.$ We particularly point out that the Dirichlet boundary condition $h=0$ on $\pt \Omega_t$ is necessary \cite{marchand2011surface} because  ``the projection of the capillary
forces onto the vertical axis is balanced out by a force of
reaction exerted by the solid'' \cite[p.18]{de2013capillarity}. We refer to \cite{Tice21} for the wellposedness of this model for the 2D single droplet without surfactant.
We remark in the case $\alpha\to0$, the Navier slip boundary condition in the first group of \eqref{eq_Mar} is reduced to the nonslip boundary condition $u=0$ on $\Omega_t.$ In this case, the Rayleigh dissipation functional becomes
\begin{equation}\label{Ray_stokes_n}
\begin{aligned}
Q=  \frac{\mu}{4} \int_{A_t} (\nabla u + \nabla u^{\top}) : (\nabla u + \nabla u^{\top})
 \ud V
 +\frac{D}2 \int_{\Omega_t}   \frac{e''(c) }{\sqrt{1+|\nabla h|^2}}  \nabla c \cdot   M \nabla  c \ud x \ud y 
 + \frac{\ssr  }2 \int_{\pt \Omega_t}|v_{\cl}|^2 \ud s.
 \end{aligned}
\end{equation}

The derivations of the governing equations \eqref{eq_Mar} can be summarized as the following three steps.

Step 1. To impose the incompressible condition, introduce the Lagrangian multiplier $p$. Then we take the first  variation of $\Ray$ with perturbations $u+\eps \tilde{u}, p+ \eps \tilde{p}$ such that $(\tilde{u}, \tilde{p})$ are compact supported in the open set $A_t$,
\begin{equation}
0 = \frac{\ud}{\ud \eps}\Big|_{\eps=0} \bbs{\Ray(u+\eps \tilde{u}, v_{\cl}; h,c) -  \int_{A_t} (p+ \eps \tilde{p}) \nabla \cdot (u+ \tilde{u}) }.
\end{equation}
This implies 
the static Stokes equation inside $A_t$ 
\begin{equation}\label{St-eq}
\begin{aligned}
\nabla \cdot \sigma = 0, \quad \nabla \cdot u=0,\\
\sigma = -pI + \mu (\nabla u + \nabla u^\top).
\end{aligned}
\end{equation}

Step 2. 
From the non-penetration boundary condition \eqref{non-p}, we take the first  variation of $\Ray$ with perturbations $u+\eps \tilde{u}$ satisfying
$ \tilde{u} \cdot n \big|_{z= 0} =0$. Using \eqref{St-eq},
\begin{equation}
\begin{aligned}
0 =& \frac{\ud}{\ud \eps}\Big|_{\eps=0} \Ray(u+\eps \tilde{u}, v_{\cl}; h,c)\\
=& \frac{\mu}{2} \int_{A_t} (\nabla u + \nabla u^{\top}) : (\nabla \tilde{u} + \nabla \tilde{u}^{\top})
 \ud V + \frac{\mu}{\alpha}\int_{\Omega_t}  u \cdot \tilde{u} \ud x \ud y  - \int_{S_t} \tilde{u} \cdot F \ud s   \\
 =& \int_{A_t} \sigma : \nabla \tilde{u} \ud V  + \frac{\mu}{\alpha}\int_{\Omega_t}  u \cdot \tilde{u} \ud x \ud y  - \int_{S_t} \tilde{u} \cdot F \ud s\\
  =& \int_{\Omega_t} \tilde{u} \cdot (\sigma n + \frac{\mu}{\alpha} u) \ud x \ud y + \int_{S_t} \tilde{u} \cdot \bbs{\sigma n - F } \ud s . 
\end{aligned}
\end{equation}
By the arbitrary of $\tilde{u}$, this implies the Navier slip boundary condition for the bottom part
\begin{equation}
\frac{\alpha}{\mu} \tau \cdot \sigma n + \tau \cdot u =0 \quad \text{ on } \Omega_t
\end{equation}
and the traction boundary condition on the capillary surface
\begin{equation}
\sigma n = F \quad \text{ on } S_t.
\end{equation}

Step 3. Taking first variation of $R$ w.r.t $v_{\cl}+ \eps \tilde{v}_{\cl}$, we obtain the contact line speed
\begin{equation}
\ssr  v_{\cl} = F_s \quad \text{ on } \pt \Omega_t.
\end{equation}
Thus after incorporating the transport of the surfactant,  we obtain the governing equations \eqref{eq_Mar} of the surfactant induced Marangoni flow for droplets on a substrate.
As a consequence, the energy dissipation law is
\begin{equation}\label{E_dis}
\dot{\F} = -2Q.
\end{equation}

Now we explain Onsager's reciprocal relation for the linear response of $u, v_{\cl}$ to the two unbalanced forces $F, F_s$. Given $c(x,y,t)$ and $h(x,y,t)$, define an operator
$$\mathcal{K}: L^2(S_t; \bR^3) \to L^2(S_t;\bR^3), \quad  F \mapsto u|_{S_t}$$ 
such that $u$ solves the Stokes equations with traction force $F$ (first group in \eqref{eq_Mar}). 
Then 
 the velocity fields $u, v_{\cl}$ solved from the governing equation \eqref{eq_Mar} satisfy the following linear response relations
\begin{equation}\label{linearRes}
\begin{array}{cc}
u = \mathcal{K} F \quad &\text{ on }S_t,\\
v_{\cl} = \frac{1}{\ssr} F_s \quad & \text{ on }\pt \Omega_t.
\end{array}
\end{equation}

\begin{prop} \label{prop_korn}
 $\mathcal{K}$ is a  bijective and self-adjoint   operator from $L^2(S_t; \bR^3)$ onto $L^2(S_t;\bR^3)$. $\mathcal{K}$ is a positive  operator in $L^2(S_t;\bR^3)$. Furthermore, for the no-slip boundary condition case, i.e., $\alpha=0$, then $\mathcal{K}$ is a bounded operator in $L^2(S_t;\bR^3)$  satisfying
 \begin{equation}\label{linearCom}
 \int_{S_t} F \cdot \mathcal{K}F \ud s \geq C \int_{S_t} |\mathcal{K}F|^2 \ud s \quad \text{ for any }F \in L^2(S_t;\bR^3). 
 \end{equation}
 \end{prop}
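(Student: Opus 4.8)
The plan is to realize $\mathcal{K}$ through the same variational formulation that produced \eqref{eq_Mar}. With $h$ and $c$ fixed, write $\epsilon(u):=\tfrac12(\nabla u+\nabla u^\top)$ and introduce
\[
X:=\{u\in H^1(A_t;\bR^3):\ \nabla\cdot u=0 \text{ in } A_t,\ u\cdot n=0 \text{ on } \Omega_t\}
\]
(replacing the last condition by $u=0$ on $\Omega_t$ when $\alpha=0$), equipped with the symmetric bilinear form
\[
a(u,w):=\mu\int_{A_t}\epsilon(u):\epsilon(w)\ud V+\frac{\mu}{\alpha}\int_{\Omega_t}u\cdot w\ud x\ud y,
\]
the boundary term being dropped when $\alpha=0$. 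The integration by parts already used to derive \eqref{eq_Mar} shows that $u=\mathcal{K}F$ is precisely the unique $u\in X$ solving $a(u,w)=\int_{S_t}F\cdot w\ud s$ for all $w\in X$, with the pressure $p$ recovered by de Rham; I would take this weak problem as the definition of $\mathcal{K}$. Lax--Milgram then delivers well-posedness once $a$ is coercive on $X$ (the crux, below) and $w\mapsto\int_{S_t}F\cdot w$ is bounded on $X$, the latter being immediate from the trace embedding $H^1(A_t)\hookrightarrow L^2(S_t)$. This identification already yields that $F\mapsto u|_{S_t}$ is well defined, linear, bounded (below) and invertible, its inverse being the Neumann trace $u|_{S_t}\mapsto(\sigma n)|_{S_t}$.

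For self-adjointness I would test the two weak problems against each other: with $u_F=\mathcal{K}F$ and $u_G=\mathcal{K}G$, setting $w=u_F$ (resp.\ $w=u_G$) gives
\[
\la\mathcal{K}F,G\ra_{L^2(S_t)}=a(u_G,u_F)=a(u_F,u_G)=\la F,\mathcal{K}G\ra_{L^2(S_t)},
\]
which is exactly the Onsager reciprocity here; combined with boundedness and everywhere-definedness this upgrades to self-adjointness. Positivity is then immediate, since $\la\mathcal{K}F,F\ra_{L^2(S_t)}=a(u_F,u_F)\ge0$, vanishing only when $\epsilon(u_F)\equiv0$ (and $u_F|_{\Omega_t}=0$ when $\alpha>0$), i.e.\ $u_F\equiv0$ by the rigid-motion exclusion described next.

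The real work is the coercivity estimate $a(u,u)\ge c_0\|u\|_{H^1(A_t)}^2$ on $X$, which simultaneously gives positivity and the bound \eqref{linearCom}. I would argue from Korn's second inequality, $\|u\|_{H^1(A_t)}^2\le C(\|u\|_{L^2(A_t)}^2+\|\epsilon(u)\|_{L^2(A_t)}^2)$, together with a Rellich compactness argument that reduces the claim to: the only $u\in X$ with $\epsilon(u)\equiv0$ — hence a rigid motion $u=a+b\times x$ — and with $u|_{\Omega_t}=0$ in $L^2$ (built into $X$ when $\alpha=0$, forced by the penalty term when $\alpha>0$) is $u\equiv0$; since $\Omega_t$ is an open piece of an affine plane, requiring $u\equiv0$ on $\Omega_t$ annihilates every component of $(a,b)$, and a textured or tilted substrate only changes the plane's normal without affecting this. (For $\alpha=0$ one may alternatively use Korn's first inequality and Poincar\'e directly, since $u=0$ on the positive-measure set $\Omega_t$.) Given coercivity, the trace bound $\|u\|_{L^2(S_t)}\le C_{\mathrm{tr}}\|u\|_{H^1(A_t)}$ yields
\[
\|\mathcal{K}F\|_{L^2(S_t)}^2\le C_{\mathrm{tr}}^2\|u_F\|_{H^1(A_t)}^2\le\frac{C_{\mathrm{tr}}^2}{c_0}\,a(u_F,u_F)=\frac{C_{\mathrm{tr}}^2}{c_0}\,\la F,\mathcal{K}F\ra_{L^2(S_t)},
\]
which is \eqref{linearCom} with $C=c_0/C_{\mathrm{tr}}^2$ when $\alpha=0$; boundedness of $\mathcal{K}$ on $L^2(S_t)$ then follows by combining this with $\la F,\mathcal{K}F\ra_{L^2(S_t)}\le\|F\|_{L^2(S_t)}\|\mathcal{K}F\|_{L^2(S_t)}$.

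The main obstacle I anticipate is exactly this coercivity step: one must run Korn's inequality on the \emph{divergence-free} subspace cut out by \emph{mixed} boundary conditions — traction-free on $S_t$, no-slip or Navier-slip on $\Omega_t$ — and check carefully that, in the Navier-slip case where the constraint on $\Omega_t$ acts only through the penalty $\tfrac{\mu}{\alpha}\int_{\Omega_t}u^2$, these conditions still exclude every nontrivial infinitesimal rigid motion (and that the relevant traces are controlled, which needs $S_t$ Lipschitz). Everything else — the de Rham recovery of $p$, the trace estimates, and the Hellinger--Toeplitz-type upgrade from bounded symmetric to self-adjoint — is routine.
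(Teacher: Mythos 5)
Your proposal is correct and rests on the same key ingredients as the paper's proof: the symmetric energy bilinear form (Lorentz reciprocity, the paper's \eqref{sys_S}) for self-adjointness, and Korn's inequality \eqref{Korn} combined with the trace and Poincar\'e inequalities for positivity and for \eqref{linearCom}. The genuine difference is the route to invertibility: the paper constructs $\mathcal{K}^{-1}$ first, by solving the Stokes system with \emph{Dirichlet} data $u=f$ on $S_t$ and reading off $F=\sigma n$, whereas you build $\mathcal{K}$ directly by posing the \emph{traction} problem weakly on the divergence-free space $X$ and applying Lax--Milgram. Your route makes explicit the one step the paper leaves entirely implicit, namely coercivity of the form on $X$ under the mixed boundary conditions; in particular your rigid-motion argument is what actually justifies positivity in the Navier-slip case $\alpha>0$, where $u|_{\Omega_t}=0$ is not built into the space but must be forced by the penalty term (the paper's parenthetical ``because $u|_{\Omega_t}=0$'' presupposes this). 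One caveat that applies to both arguments: every $w\in X$ satisfies $\int_{S_t}w\cdot n\ud s=0$ by the divergence theorem together with $w\cdot n=0$ on $\Omega_t$, so $F=\lambda n$ with $\lambda$ constant is annihilated by $\mathcal{K}$ (it corresponds to shifting the pressure by a constant); injectivity on all of $L^2(S_t;\bR^3)$, and the strict positivity the paper asserts, therefore require restricting to the orthogonal complement of $\mathrm{span}\{n\}$ or fixing the pressure normalization. Your more cautious conclusion (vanishing of $a(u_F,u_F)$ gives only $u_F\equiv0$) is the defensible one, but the bijectivity claim would need this same qualification in your write-up.
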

\begin{proof}
First, given any $f\in L^2(S_t;\bR^3)$, the solution to 
\begin{equation}
\left\{ \begin{array}{cc}
  \nabla p=\mu \Delta u  & \text{ in } A_t,\\
\nabla \cdot u = 0 &  \text{ in } A_t, \\
u = f & \text{ on } S_t,\\
  \frac{\alpha}{\mu} \tau \cdot \sigma n + \tau \cdot u = 0, \quad u \cdot n = 0 & \text{ on } \Omega_t,
\end{array}
\right. 
\end{equation}
exists uniquely. This gives a unique $F=\sigma n$ and thus $\mathcal{K}$ is bijective
operator from $L^2(S_t;\bR^3)$ onto $L^2(S_t;\bR^3)$. 

 Second, for any $F_1, F_2 \in L^2(S_t;\bR^3)$, let $u_1 = \mathcal{K} F_1$ and $u_2 = \mathcal{K} F_2$. Then we have
\begin{equation}\label{sys_S}
\begin{aligned}
\int_{S_t} F_1 \cdot  \mathcal{K} F_2 \ud s =& \int_{S_t} \sigma_1 n \cdot u_2  \ud s \\
=&  \frac{\mu}{2} \int_{A_t} (\nabla u_1 + \nabla u_1 ^\top) : (\nabla u_2 + \nabla u_2^\top) \ud V + \frac{\mu}{\alpha} \int_{\Omega_t} u_1 \cdot u_2 \ud x \ud y =\int_{S_t} \mathcal{K}F_1 \cdot   F_2 \ud s,
\end{aligned}
\end{equation}
which, together with $D(\mathcal{K})= L^2(S_t)$, shows $\mathcal{K}$ is self-adjoint. The symmetry \eqref{sys_S} is known as Lorentz's reciprocal theorem for the Stokes flow.

Third, the positivity of $\mathcal{K}$ is directly from 
\begin{equation}
\int_{S_t} F \cdot \mathcal{K} F \ud s \geq 0.
\end{equation}
This equality holds if and only if $F\equiv 0$ because $u|_{\Omega_t}=0$ implies the Korn's inequality \cite[(3)]{Desvillettes_Villani_2002}
\begin{equation}\label{Korn}
\int_{A_t} |\nabla u|^2 \ud V \leq C \int_{A_t} |\nabla u + \nabla u ^\top|^2 \ud V,
\end{equation}
where $C$ is a generic constant.

%As a direct consequence, we have a general Korn's inequality
%\begin{equation}\label{G_K}
%\int_{A_t}|\nabla u|^2 \ud V \leq C \bbs{\int_{A_t} |\nabla u + \nabla u ^\top|^2 \ud V + \int_{\Omega_t}|u|^2 \ud x\ud y}.
%\end{equation}
%Indeed, we give a proof by contradiction. If not, there exists a sequence $u^n$ satisfying 
%\begin{equation}
%\int_{A_t}|\nabla u^n|^2 \ud V=1, \quad \int_{A_t} |\nabla u^n + (\nabla u^n) ^\top|^2 \ud V + \int_{\Omega_t}|u^n|^2 \ud x\ud y \leq \frac{1}{n} \to 0.
%\end{equation}
%Thus there exists a subsequence $u^n \to u^*$ such that $\nabla u^n \wra \nabla u^*$, $u^*|_{\Omega_t}=0$ and $\int_{A_t}|\nabla u^n|^2 \ud V=1$. Hence $\nabla u^n \to \nabla u^*$, $\int_{A_t}|\nabla u^*|^2 \ud V=1$, which leads to a contradiction in \eqref{Korn}.

Fourth, in the case $\alpha=0$.  Combining the trace theorem, the standard Poincare's inequality and Korn's inequality \eqref{Korn}, we know
\begin{equation}
\int_{S_t} |u|^2 \ud s \leq C \int_{A_t} \bbs{|u^2| + |\nabla u|^2} \ud V \leq C \int_{A_t} |\nabla u + \nabla u ^\top|^2 \ud V .
\end{equation}
This concludes \eqref{linearCom}.
\end{proof} 

{\blue The first statement in Proposition \ref{prop_korn} proves the linear response operator $\mathcal{K}$ is positive self-adjoint in $L^2(S_t;\bR^3)$. The symmetry \eqref{sys_S} of $\mathcal{K}$ is exactly the original statement in Lorentz's reciprocal theorem for the Stokes flow, while we interpret it as a positive self-adjoint operator in a functional space. 

The second advantage of Proposition \ref{prop_korn} is we proved the  lower bound of the inverse operator $\mathcal{K}^{-1}$, which means the dissipation term $Q$ in \eqref{E_dis} and \eqref{Ray_stokes_n} has a lower bound. That gives further the energy estimate
\begin{equation}\label{new_E_dis}
\begin{aligned}
\dot{\F} = -2Q\leq - C \int_{S_t} |u|^2 \ud s
 -D \int_{\Omega_t}   \frac{e''(c) }{\sqrt{1+|\nabla h|^2}}  \nabla c \cdot   M \nabla  c \ud x \ud y 
 - {\ssr  } \int_{\pt \Omega_t}|v_{\cl}|^2 \ud s.
 \end{aligned}
\end{equation}
We remark in the geometric motion case, the first equation in \eqref{3Dfull} is the linear  response relation $u = \mathcal{K}F= \frac{1}{\beta} F.$ Thus Proposition \ref{prop_korn} also tells us
the Rayleigh dissipation functional $Q$ in \eqref{Ray_stokes} for the viscous flow is stronger than the
one defined in \eqref{RayQ} for the geometric motion case.
As a consequence of \eqref{new_E_dis}, we know $\int_0^{+\8}\int_{S_t} |u|^2 \ud s \ud t <+\8$. 

The third advantage of Proposition \ref{prop_korn} is the  characterization for the steady solution to \eqref{eq_Mar}. Rigorous proof for the uniform convergence of the dynamic solutions to its steady state is challenging; see \cite{Guo_Tice_2018} for small data convergence without surfactant. However, if the limiting solution (the capillary surface) remains a smooth surface, i.e., assume no fattening phenomenon, we will show below the steady solution $u\equiv  0$ in $\bar{A_t}$, $c \equiv \text{const}$ on $S_t$ and the steady shape $S_t$ is characterized by a spherical cap with constant mean curvature while the contact angle being Young's angle. Specifically:\\
 (i) For the steady solution to  \eqref{eq_Mar} with $Q=0$, we know $\int_{\Omega_t}   \frac{e''(c) }{\sqrt{1+|\nabla h|^2}}  \nabla c \cdot   M \nabla  c \ud x \ud y=0$ and thus $c \equiv  c^*$ is constant on $S_t$;\\
  (ii) As a consequence of \eqref{new_E_dis}, we know  $Q=0$ implies $\int_{S_t} |u|^2 \ud s=0$ and thus $u=0$ on $S_t$. From $u=0$ on $S_t\cap \Omega_t$, the bulk fluids described by the stokes equation satisfy $u\equiv 0$ in $\bar{A_t}$. This uniquely determines $\sigma=-p I$ with a constant pressure $p=\text{const}$ and thus $F=-p n = -\gamma(c^*)Hn$ due to \eqref{notation1} and $c\equiv c^*$;\\
 (iii) Therefore the characterization problem is reduced to solve a spherical cap profile with constant mean curvature $H=\frac{p}{\gamma(c^*)}=\text{const}$.  Denote $R$ as the radius of the spherical cap. Given volume $V$ and total mass $M_0$ of surfactant in \eqref{const}, we solve unknowns $\{\theta, R, c^*\}$ such that
\begin{align}\label{capS}
\cos \theta = \frac{\gamma_{\nsg} - \gamma_{\nsl}}{\gamma(c^*)},\quad
V= \frac{\pi R^3}{3}(2+\cos \theta)(1-\cos \theta)^2,\quad
\frac{M_0}{c^*} = {2\pi R^2}(1-\cos \theta),
\end{align}
where we used the volume and area formula for a spherical cap with radius $R$. Notice $\gamma(c)$ is strictly decreasing w.r.t. $c$. The solvability of these algebraic equations, the uniqueness of the spherical cap solution and the convergence to this steady solution will be left for a future study.
}

\begin{rem}
We also remark under the non-penetration boundary condition $u \cdot n = 0$ on the bottom of the droplet,    the Navier slip boundary condition  on a textured substrate $w(x,y)$ becomes
\begin{equation}
  \frac{\alpha}{\mu} \tau \cdot \sigma n + \tau \cdot u = 0 \quad  \text{ on } \pt A_t \cap \{z=w\},
\end{equation}
which is equivalent to 
\begin{equation}
\alpha \, (n\cdot\nabla)  ( \tau\cdot u ) +  \tau\cdot u  = \alpha u\cdot ( n\cdot\nabla \tau +  n\cdot\nabla \tau ) \quad \text{ on } \pt A_t \cap \{z=w\}.
\end{equation}
In the case $w=0$, then the Navier slip boundary condition is simplified as
\begin{equation}
 \tau\cdot u = \alpha \partial_z ( \tau\cdot u ) \quad  \text{ on } \Omega_t.
\end{equation} 
\end{rem}

\section{Algorithms based on unconditionally stable explicit boundary updates}\label{sec_num}
In this section, we propose a numerical scheme for the droplet dynamics with the surfactant on the moving capillary surface. These mainly rely on  decoupling the motion of the contact lines,  the motion of the capillary surface, and the dynamics of the surfactant on the surface. Therefore, we will adapt  the 1st/2nd order schemes developed in \cite{gao2020gradient} for the pure geometric motion of single droplets and then incorporate the constantly changed dynamic surface tension $\gamma(c)$ due to the dynamics of the surfactant.

To give a clear presentation,  we describe the numerical scheme  for 2D droplets. For the  3D droplets, the construction of the arbitrary Lagrangian-Eulerian method for the moving grids need to be developed and will be left for a future study.
Before this, we first derive the governing equations for 2D droplets laid by surfactant but placed on an inclined textured substrate.
This is described by the motion of the capillary surface, the moving contact lines and the transport of the  surfactant.

\subsection{Contact line dynamics and surfactant effect  for 2D droplets  on an inclined textured substrate}\label{sec_3}

 Given an inclined textured  solid substrate, we follow the convention for studying droplets on an inclined substrate  and choose the Cartesian coordinate system built on an inclined plane with effective inclined angle $\theta_0$ such that $-\frac{\pi}{2}<\theta_0<\frac{\pi}{2}$, i.e.,  $(\tan \theta_0) x$ is the new $x$-axis we choose; see Fig \ref{fig:ill} (b).
With this Cartesian coordinate system,  the textured substrate is described by a graph function $w(x)$ and the droplet is then described by
\begin{equation}
A_t:= \{(x,y); a(t)\leq x\leq b(t) , w(x)\leq y\leq u(x,t)+w(x) \}. 
\end{equation}
The motion of this droplet is described by the relative height function of the capillary surface $u(x,t)\geq 0$ and the  partially wetting domain $a(t)\leq x \leq b(t)$ with free boundaries $a(t), b(t).$

 With the new Cartesian coordinate system, the substrate $w(x)$ and the total height $h(x,t):= u(x,t)+w(x)$, one can use the same lift-up method in Section \ref{sec_lift} to derive the continuity equation for $c(x,t), x\in(a(t),b(t))$, i.e.
\begin{equation}\label{eq-c2D}
 c_t  - v_n \pt_x \bbs{c \frac{h_x }{ \sqrt{1+h_x^2}}}=D \pt_{ss}c, \quad \pt_s = \frac{1}{\sqrt{1+h_x^2}} \pt_x.
\end{equation}
This is equivalent to
\begin{equation}\label{recast_c2D}
\pt_t c- v_n  c_x  \frac{h_x }{ \sqrt{1+h_x^2}}  +  v_n c H=D \pt_{ss}c, \quad  H = -\pt_x \bbs{\frac{h_x}{\sqrt{1+h_x^2}}}=-\frac{h_{xx}}{(1+h_x^2)^{\frac32}}
\end{equation}

 Notice the compatibility condition \eqref{com_22} on the contact line now changes to
\begin{equation}\label{com-com}
h_t|_b = - \pt_x u |_b b', \quad h_t|_a = - \pt_x u|_a a'
\end{equation}
due to $u(x(t), t) = 0$ at $x=a, b$. We will compute the rate of change of total energy  and derive the governing equations for droplets placed an inclined textured surface  below.

\subsection{The rate of change of the total surface energy with a textured substrate}

First, we compute the rate of change of the surface energy for the capillary surface.

Multiplying \eqref{eq-c2D} by $e'(c)\sqrt{1+h_x^2}$, 
 same derivations as \eqref{CL_energy_ex} gives
\begin{equation}\label{ttt}
\pt_t \bbs{e(c) \sqrt{1+h_x^2} } -  \gamma(c) H h_t - \pt_x \bbs{e(c)\frac{h_t h_x}{\sqrt{1+h_x^2}} } = D e'(c) \pt_{ss}c \, \sqrt{1+h_x^2}. 
\end{equation}
On one hand, integration of the left-hand-side of \eqref{ttt} gives
\begin{align*}
&\int_{a(t)}^{b(t)} \pt_t \bbs{e(c) \sqrt{1+h_x^2} } -  \gamma(c) H h_t  \ud x   - e(c)\frac{h_x h_{t}}{\sqrt{1+h_x^2}} \Big|_a^b \\
=& \frac{\ud}{\ud t}\int_{a(t)}^{b(t)}  e(c) \sqrt{1+h_x^2}   \ud x  +  \int_{a(t)}^{b(t)} -\gamma(c) H h_t  \ud x  - e(c)\frac{h_x h_{t}}{\sqrt{1+h_x^2}} \Big|_a^b  - b' e(c) \sqrt{1+h_x^2}\big|_b + a' e(c) \sqrt{1+h_x^2}\big|_a\\
=&   \frac{\ud}{\ud t}\int_{a(t)}^{b(t)}  \bbs{e(c) \sqrt{1+h_x^2} }  \ud x  -  \int_{a(t)}^{b(t)} \gamma(c) H h_t  \ud x  -  e(c) I_b + e(c) I_a,
\end{align*}
where we used the Reynolds transport theorem and
\begin{equation}
I_b :=  b'   \sqrt{1+h_x^2}\big|_b +   \frac{h_x h_{t}}{\sqrt{1+h_x^2}}\Big|_b, \quad     I_a:=  a'  \sqrt{1+h_x^2}\big|_a +  \frac{h_x h_{t}}{\sqrt{1+h_x^2}}\Big|_a.
\end{equation}
Then by compatibility condition \eqref{com-com},
\begin{equation}
I_b = b'  \bbs{ \sqrt{1+h_x^2} - \frac{h_x  u_x }{\sqrt{1+h_x^2}}} = \frac{b' (1+ h_x w_x)}{\sqrt{1+h_x^2}}\Big|_b, \quad I_a =  \frac{a' (1+ h_x w_x)}{\sqrt{1+h_x^2}}\Big|_a.
\end{equation}
On the other hand, the right-hand-side of \eqref{ttt} becomes
\begin{align*}
\int_a^b D e'(c) \pt_x \bbs{\frac{c_x}{\sqrt{1+h_x^2}}} \ud x
=- D \int_a^b e''(c) \frac{c_x^2}{\sqrt{1+h_x^2}} \ud x +D \frac{\pt_x e(c)}{\sqrt{1+h_x^2}} \Big|_a^b.
\end{align*}
Therefore,
\begin{equation}\label{energy_c}
\begin{aligned}
& \frac{\ud}{\ud t}\int_{a(t)}^{b(t)}  \bbs{e(c) \sqrt{1+h_x^2} } \ud x -  \int_{a(t)}^{b(t)} \gamma(c) h_t H \ud x + D \int_a^b e''(c) \frac{c_x^2}{\sqrt{1+h_x^2}} \ud x\\
  =& e(c(b)) I_b  - e(c(a)) I_a   +  D \frac{\pt_x e(c)}{\sqrt{1+h_x^2}} \Big|_a^b.
\end{aligned}
\end{equation}

Particularly for $e(c)=c$, we have
\begin{equation}\label{total_flux}
  0= \frac{\ud}{\ud t}\int_{a(t)}^{b(t)}  \bbs{ c \sqrt{1+h_x^2} } \ud x =  \cos \theta_b [ D  c_x|_b+ c(b)(1+h_x w_x)b'] -  \cos \theta_a [ D c_x|_{a}+ c(a)(1+h_x w_x)  a'],
\end{equation}
where $\cos \theta_a = \frac{1}{\sqrt{1+h_x^2}}\Big|_a , \cos \theta_b = \frac{1}{\sqrt{1+h_x^2}}\Big|_b$ and $\theta_a, \theta_b$ are the dynamic contact angle at $a,b$ respectively.
Same as \eqref{3D_bc_c}, to maintain the mass conservation law, we impose the Robin boundary condition for surfactant concentration \eqref{eq-c2D}
\begin{equation}\label{BC_c}
D  c_x|_b+ c(b)(1+h_x w_x)|_b ~b'=0, \quad   D c_x|_{a}+ c(a)(1+h_x w_x)|_a ~  a'=0.
\end{equation}

Using boundary condition \eqref{BC_c}, we further simplify
\eqref{energy_c} as
\begin{equation}\label{energy_c_n}
\begin{aligned}
& \frac{\ud}{\ud t}\int_{a(t)}^{b(t)}  \bbs{e(c) \sqrt{1+h_x^2} } \ud x -  \int_{a(t)}^{b(t)} \gamma(c) h_t H \ud x + D \int_a^b e''(c) \frac{c_x^2}{\sqrt{1+h_x^2}} \ud x\\
  =&  \cos \theta_b \bbs{e(c(b))(1+ h_x w_x)|_b b' + D e' c_x|_b} -  \cos \theta_a \bbs{e(c(a)) (1+ h_x w_x)|_a a' + D e' c_x|_a}\\
  =&\cos \theta_b  \gamma(c(b)) b' (1+ h_x w_x)|_b -  \cos \theta_a  \gamma(c(a)) a' (1+ h_x w_x)|_a.
\end{aligned}
\end{equation}

Second, we compute the rate of change of the total free energy including the total surface energy and the gravitational potential energy.

For a 2D droplet placed on an inclined textured surface, with the gravitational effect and the volume constraint $V$, we take the total free energy  of the droplet as
\begin{equation}\label{energy}
\begin{aligned}
\F(h(t), a(t), b(t), \lambda(t))= &\int_{a(t)}^{b(t)}  e(c) \sqrt{1+ (\pt_x h)^2} \ud x  + (\gamma_{\nsl}-\gamma_{\nsg}) \int_{a(t)}^{b(t)} \sqrt{1+ (\pt_x w)^2} \ud x\\
&+ \rho g  \int_{a(t)}^{b(t)}\int_{w}^{u+w}(y\cos \theta_0+x\sin\theta_0)\ud y \ud x - \lambda(t) \bbs{ \int_{a(t)}^{b(t)} u \ud x - V} ,
\end{aligned}
\end{equation}
where $h= u+ w$, $e(c)$ is the energy density on the capillary surface, $\rho$ is the density of the liquid, and $g$ is  the gravitational acceleration. Denote $\kappa:= \rho g.$

Notice  the time derivative of the second term in $\F$ is
$$
  \frac{\ud}{\ud t} (\gamma_{\nsl}-\gamma_{\nsg}) \int_{a(t)}^{b(t)} \sqrt{1+ (\pt_x w)^2}  \ud x =  (\gamma_{\nsl}-\gamma_{\nsg})\bbs{b'  \sqrt{1+ (\pt_x w)^2}|_b -a' \sqrt{1+ (\pt_x w)^2}|_a},
$$
and from $u|_{a,b}=0$, the time derivative of the third term in $\F$ is
\begin{equation}
  \frac{\ud}{\ud t}  \kappa  \int_{a(t)}^{b(t)}\int_{w}^{u+w}(y\cos \theta_0+x\sin\theta_0)\ud y \ud x =  \kappa  \int_{a(t)}^{b(t)}h_t (h\cos \theta_0+x\sin\theta_0) \ud x. 
\end{equation}
This,  together with the energy dissipation \eqref{energy_c_n}, gives
\begin{equation}\label{2D_rough_E}
\begin{aligned}
  \frac{\ud}{\ud t} \F  
=& -  \int_{a(t)}^{b(t)}  \bbs{ -\gamma(c) H + \lambda - \kappa(h\cos \theta_0+x\sin\theta_0 } h_t \ud x
 - D \int_{a(t)}^{b(t)}  e''(c) \frac{c_x^2}{\sqrt{1+h_x^2}} \ud x \\   
     &+ b' [\cos \theta_b  \gamma(c(b)) (1+ h_x w_x)|_b + (\gamma_{\nsl}-\gamma_{\nsg}) \sqrt{1+ (\pt_x w)^2}|_b] \\
     &- a' [\cos \theta_a  \gamma(c(a))(1+ h_x w_x)|_a + (\gamma_{\nsl}-\gamma_{\nsg})\sqrt{1+ (\pt_x w)^2}|_a].
\end{aligned}
\end{equation}

\subsection{Energetic considerations: the Onsager principle and the governing equations}

Same as Section \ref{sec_2}, we
choose the Rayleigh dissipation functional as
\begin{equation}
Q:=\frac{\beta}2 \int_{a(t)}^{b(t)} \frac{ h_t^2}{\sqrt{1+h_x^2}} \ud x + \frac{\ssr }{2} (|b'|^2+ |a'|^2) +  \frac{D}{2} \int_{a(t)}^{b(t)} e''(c) \frac{c_x^2}{\sqrt{1+h_x^2}} \ud x.
\end{equation}
Then by the same derivations as \eqref{ray}-\eqref{3Dfull} for the 3D case, we conclude 
the governing equations for the full dynamics of a 2D single droplet on a textured substrate 
\begin{equation}\label{eq_full}
\left\{
\begin{aligned}
 & \frac{\beta}{\sqrt{1+h_x^2}} h_t= -\gamma(c) H - \kappa(h\cos \theta_0+x\sin\theta_0)+ \lambda, \quad \text{ in } (a(t),b(t))\\
 & \qquad \qquad \qquad  (h-w)|_a =0, \quad  (h-w)|_b =0,\\
 &  c_t  - v_n \pt_x \bbs{c \frac{h_x }{ \sqrt{1+h_x^2}}}=D \pt_{ss}c, \quad  \text{ in } (a(t),b(t))\\
 & \quad \quad \qquad   D  c_x|_b+ c(b)b'(1+h_x w_x)|_b=0, \quad   D c_x|_{a}+ c(b)  a' (1+h_x w_x)|_a=0,\\
&  \ssr  b'=-\cos \theta_b  \gamma(c(b)) (1+ h_x w_x)|_b - (\gamma_{\nsl}-\gamma_{\nsg}) \sqrt{1+ (\pt_x w)^2}|_b,\\
&   \ssr  a'=  \cos \theta_a  \gamma(c(a))(1+ h_x w_x)|_a + (\gamma_{\nsl}-\gamma_{\nsg})\sqrt{1+ (\pt_x w)^2}|_a,\\
 & \int_{a(t)}^{b(t)} h \ud x = V,
\end{aligned}
\right.
\end{equation}
where $\pt_s = \frac{1}{\sqrt{1+h_x^2}} \pt_x$ and 
$H = -\pt_x \bbs{\frac{h_x}{\sqrt{1+h_x^2}}} = -\frac{h_{xx}}{(1+h_x^2)^{\frac{3}{2}}}$ is the mean curvature. After taking into account the   textured substrate, we remark that the unbalanced Young force  becomes 
\begin{align}
F_b = -\cos \theta_b  \gamma(c(b)) (1+ h_x w_x)|_b - (\gamma_{\nsl}-\gamma_{\nsg}) \sqrt{1+ (\pt_x w)^2}|_b,\\
F_a =\cos \theta_a  \gamma(c(a))(1+ h_x w_x)|_a + (\gamma_{\nsl}-\gamma_{\nsg})\sqrt{1+ (\pt_x w)^2}|_a.
\end{align}

As a consequence, the energy dissipation relation \eqref{ED} becomes
\begin{align*}
\frac{\ud}{\ud t} \F =- \beta \int_{a(t)}^{b(t)} \frac{ h_t^2}{\sqrt{1+h_x^2}}  \ud x
 - D \int_a^b e''(c) \frac{c_x^2}{\sqrt{1+h_x^2}} \ud x - \ssr  (|b'|^2 + |a'|^2) =-2Q.
\end{align*}
Before  proceeding to the computations for the full dynamics \eqref{eq_full}, we recast the  equation \eqref{recast_c2D} for the dynamics of surfactant concertration $c$ as
\begin{equation}\label{num_c}
 c_t  -\frac{h_t h_x}{1+h_x^2} c_x - \frac{h_t h_{xx}}{(1+h_x^2)^2} c =D \frac{c_{xx}}{1+h_x^2} -\frac{D h_x h_{xx}}{(1+h_x^2)^2} c_x,
\end{equation}
which is a computationally friendly form.

\subsection{Proposed numerical scheme}\label{sec_scheme}

We will split the PDE solver for the full dynamics of droplets with surfactant into the following three steps: (i) explicit boundary updates; (ii) semi-implicit capillary surface updates and (iii) implicit surfactant updates.
The unconditional stability for the explicit 1D boundary updates is proved in \cite{gao2020gradient}, which efficiently decouples the computations of the boundary evolution and the capillary surface updates. The semi-implicit  capillary surface updates with the volume constraint and the implicit surfactant updates can be convert to  standard elliptic solvers at each step. 

Let $t^n = n\Delta t$, $n=0, 1, \cdots$ with time step $\Delta t$. We  approximate $a(t^n), b(t^n), h(t^n) $ by $a^n, b^n, h^n$ respectively. We present the first order scheme as follows.  For completeness, we also provide a pseudo-code in Appendix \ref{sec_code}.\\
 \textit{First order scheme:}
 
Step 1.  Explicit boundary updates. Compute the one-side approximated derivative of $h^n$ at $b^n$ and $a^n$, denoted as $(\pt_x h^n)_N$ and $(\pt_x h^n)_0$. Then by  the moving contact line   boundary conditions in \eqref{eq_full}, we update $a^{n+1}, b^{n+1}$ using 
\begin{equation}
\begin{aligned}\label{end_a}
\ssr \frac{a^{n+1}-a^n}{\Delta t}&=
 \cos \theta^n_a  \gamma(c^n_0)(1+ (h^n_x)_0 (w_x)_0) + (\gamma_{\nsl}-\gamma_{\nsg})\sqrt{1+ (\pt_x w)_0^2}, \quad  \cos \theta^n_a = \frac{1}{\sqrt{1+(h_x^n)_0^2}},\\
\ssr \frac{b^{n+1}-b^n}{\Delta t}&= -\cos \theta^n_b  \gamma(c^n_N)(1+ (h^n_x)_N (w_x)_N) - (\gamma_{\nsl}-\gamma_{\nsg})\sqrt{1+ (\pt_x w)_N^2}, \quad  \cos \theta^n_b = \frac{1}{\sqrt{1+(h_x^n)_N^2}}.
\end{aligned}
\end{equation}

Step 2. Rescale $h^n$ from $[a^n, b^n]$ to $[a^{n+1}, b^{n+1}]$ with $O(\Delta t ^2)$ accuracy using an ALE discretization. 
For $x^{n+1}\in[a^{n+1}, b^{n+1}]$, denote the map from moving grids at $t^{n+1}$ to $t^n$ as
\begin{equation}
x^n:= a^n + \frac{b^n-a^n}{b^{n+1}-a^{n+1}}(x^{n+1}-a^{n+1})\in[a^n, b^n].
\end{equation}
Define the rescaled solution for $h^n$ as
\begin{equation}\label{inter-u-0}
    h^{n*}(x^{n+1}):= h^n(x^n)+ \pt_x h^n (x^n)(x^{n+1}-x^n).
\end{equation}
By the Taylor expansion, it is easy to verify   that
$
    h^{n*}(x^{n+1}) = h^n(x^{n+1}) + O(|x^n-x^{n+1}|^2).
$ From \cite[(B.11)]{gao2020gradient}, we have the first order accuracy of
\begin{equation}
\pt_t h (x^{n+1}, t^{n+1}) = \frac{h(x^{n+1}, t^{n+1}) - h^{n*}(x^{n+1}, t^n)}{\Delta t} + O(\Delta t).
\end{equation}

Step 3. Capillary surface updates  with the volume  constraint. Update $h^{n+1}$ and $\lambda^{n+1}$ semi-implicitly.
\begin{equation}\label{tm313}
\begin{aligned}
\frac{\beta}{\sqrt{1+ (\pt_x h^{n*})^2}} \frac{h^{n+1}-h^{n*}}{\Delta t}=   \frac{\gamma(c^n)}{(1+ (\pt_x h^{n*})^2)^{\frac{3}{2}}} \pt_{xx} h^{n+1}-\kappa (h^{n+1}\cos \theta_0 + x^{n+1}\sin \theta_0)+\lambda^{n+1}, &\\
h^{n+1}(a^{n+1})=w(a^{n+1}), \quad h^{n+1}(b^{n+1})=w(b^{n+1})&,\\
\int_{a^{n+1}}^{b^{n+1}} {u}^{n+1}(x^{n+1}) \ud x^{n+1} = \int_{a^0}^{b^0} u^0(x^0) \ud x^0,&
\end{aligned}
\end{equation}
where the independent variable  is $x^{n+1}\in(a^{n+1},b^{n+1})$.

Step 4. Update the concentration of surfactant.
\begin{equation}\label{sur_update}
\begin{aligned}
(h_t)^{n+1}:= \frac{h^{n+1}-h^{n*}}{\Delta t},\\
(1+(h_x^{n+1})^2)\frac{c^{n+1}-c^n}{\Delta t} = D \pt_{xx}c^{n+1} - \frac{D h_x^{n+1} h_{xx}^{n+1}}{1+(h_x^{n+1})^2} \pt_x c^n + h_t^{n+1} h_x^{n+1}\pt_x c^n +   \frac{h_t^{n+1} h_{xx}^{n+1}}{1+(h_x^{n+1})^2}  c^n
\end{aligned}
\end{equation}
with boundary conditions
\begin{equation}
\begin{aligned}
D  (c_x)^{n+1}_0+ c^{n+1}_0(1+(h_x)_0 (w_x)_0)\frac{a^{n+1}-a^n}{\Delta t}=0,\\
D  (c_x)^{n+1}_N+ c^{n+1}_N(1+(h_x)_N (w_x)_N)\frac{b^{n+1}-b^n}{\Delta t}=0.
\end{aligned}
\end{equation}

We remark that the second order numerical scheme developed in \cite{gao2020gradient} can be adapted here. When there are topological changes of droplets such as splitting and merging due to an impermeable textured substrate,   the projection method for solving variational inequalities developed in \cite{gao2020projection} can also be adapted.

\section{Computations for droplets with dynamic surface tension}\label{sec_exm}

We now use the numerical scheme in Section \ref{sec_scheme} to demonstrate several challenging examples:
(i) the  surface tension decreasing phenomena and asymmetric capillary surfaces due to the presence of the surfactant; (ii) the enhanced contact angle hysteresis or resistance with gravity for droplets placed on an inclined substrate;  (iii) droplets on a textured substrate or in a container with different surfactant concentrations.

\subsection{ Surface tension decreasing phenomena and  asymmetric capillary surface due to the presence of surfactant}
In the first example, we compute the spreading of a droplet placed on a flat plane to observe the surface tension decreasing phenomena due to the presence of different concentrations of surfactant. 

First, we set the initial droplet as a spherical cap profile
\begin{equation}\label{initial}
h(x,0) = \sqrt{R^2 - x^2} -R \cos(\theta_{\text{in}}) \quad \text{ with }\,\, R = \frac{b_0}{\theta_{\text{in}}}, \quad  b_0 = 3.7,\quad \theta_{\text{in}} = \frac{3\pi}{16}
\end{equation}
and the computational parameters as follows
\begin{equation}\label{comp_para}
\beta = 0.1, \quad \kappa = 0.5, \quad \gamma_{\nsl} - \gamma_{\nsg}=-0.7, \quad  \ssr =1; \quad D=0.1;\quad  T = 1.5; \quad \Delta t = 0.015, \quad N=800.
\end{equation}
Following  \eqref{Lang}, the surfactant-dependent surface tension
$\gamma(c)$ is taken to be
\begin{equation}
\gamma(c) = \gamma_0 + \ln(1-c) \quad  \text{ with } \gamma_0 = 2.
\end{equation}
This means $\gamma(c)$ is decreasing w.r.t $c$ and if $c=0$, the equilibrium Young's angle is given by $\cos \theta_Y =- \frac{\gamma_{\nsl} - \gamma_{\nsg}}{\gamma_0}=0.35.$

In Fig. \ref{fig_plane} (1st), we first take $c= 0$ and compute the spreading process without the surfactant starting from the initial droplet \eqref{initial} to time $T=1.5$. We observe the initial  symmetric droplet (marked with black pentagrams) tends to shrink to it's equilibrium symmetrically; see the symmetric droplet profile at $T=1.5$ (marked with black circles). Notice the dynamics of the concentration of surfactant $c$ is shown on the capillary surface using different color; see color bar on the right side of the figures. The evolution of the capillary surface is drawn at equal time intervals with solid thin lines and patched with color showing the surfactant concentration. 

Then in Fig. \ref{fig_plane} (2nd),
we take a uniform initial concentration of surfactant $c(x,0)=0.8$  and start from the same initial symmetric droplet, which is marked with black pentagrams and is patched with a uniform color. We observe that, as time increases to $T=1.5$, the droplet tends to spread out  like a thin film due to the lower effective surface tension $\gamma(c)$; see the flatten  droplet profile at $T=1.5$ (marked with black circles). During the spreading process, the concentration of the surfactant at two contact endpoints decreases, so we observe the droplet still holds a pancake shape instead of completely spreading out; see similar droplet profiles in the lubrication model \cite{xu2016variational, limat2004three}.

To see the significant contribution of different surfactant concentrations to the droplet profile,  we use the same initial capillary surface (marked with black pentagrams) but  take an asymmetric initial concentration of the surfactant in Fig. \ref{fig_plane} (3rd). Explicitly, we take initial concentration as
\begin{equation}\label{in_c}
c(x,0) =  0.5+ \frac{0.6}{\pi} \arctan(100 x),
\end{equation}
which increases from $0.2$ to $0.8$ with a sharp transition; see the patched curve marked with  black pentagrams. Then as time increases, the surfactant 'drags' the droplet to the right and induces an asymmetric motion. We can observe an advancing contact angle and a receding contact angle in the droplet profile at $T=1.5$ (marked with black circles). {\blue To further observe the long time behaviors of the droplets, with the same initial asymmetric concentration of surfactant \eqref{in_c}, we compute the dynamics of the droplet profile up to $T=25$. We use same computational parameters as in \eqref{comp_para} except $T=25, \,\Delta t=0.125,\,N=1600.$ The asymmetric droplet profile dragged by surfactant becomes symmetric again with approximated constant-concentration of surfactant at $T=20$. This is a numerical justification for the convergence of dynamic solution to the steady spherical cap solution given in \eqref{capS}. }

%\begin{tabular}{|c|c|c|c|}
%\hline 
%• & clean & uniform  & asymmetric \\ 
%\hline 
%Initial $c$ & 0 & $0.8$ & $0.2, 0.8$ \\ 
%\hline 
%• & • & • & • \\ 
%\hline 
%\end{tabular} 

\begin{figure}
\includegraphics[scale=0.35]{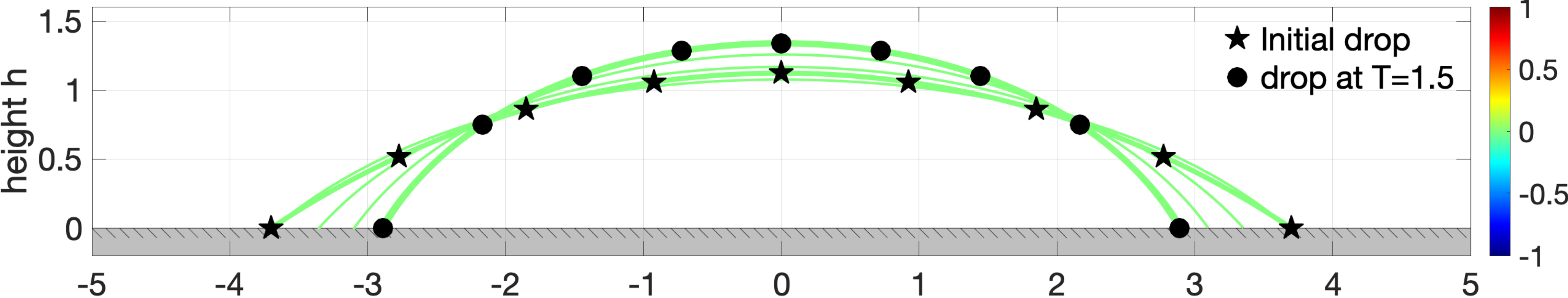} 
\includegraphics[scale=0.35]{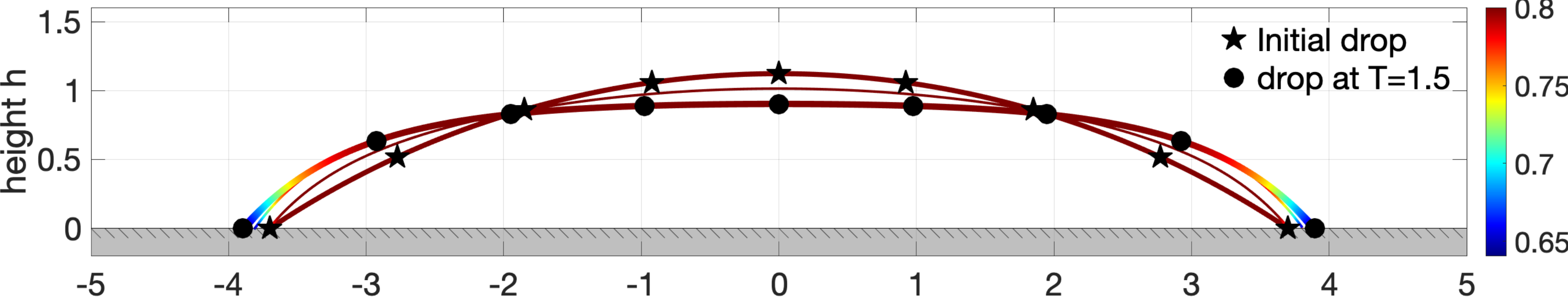}
 \includegraphics[scale=0.35]{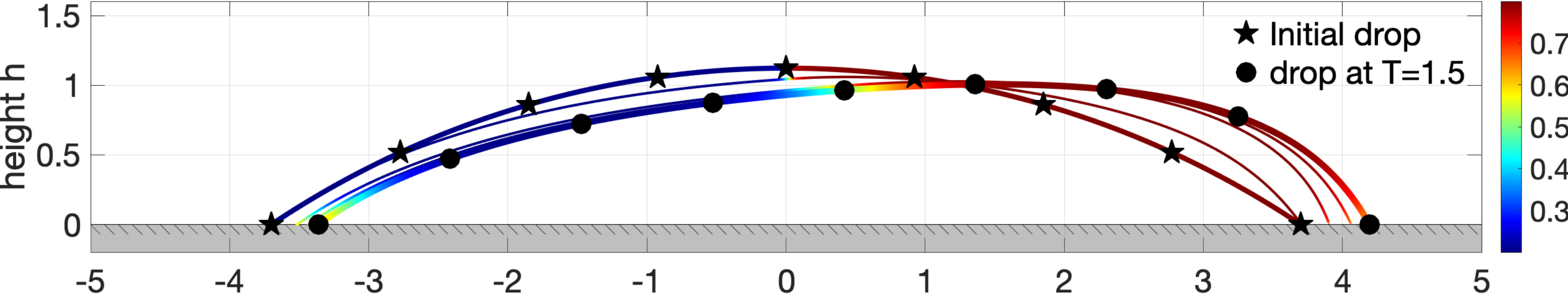} 
 \includegraphics[scale=0.35]{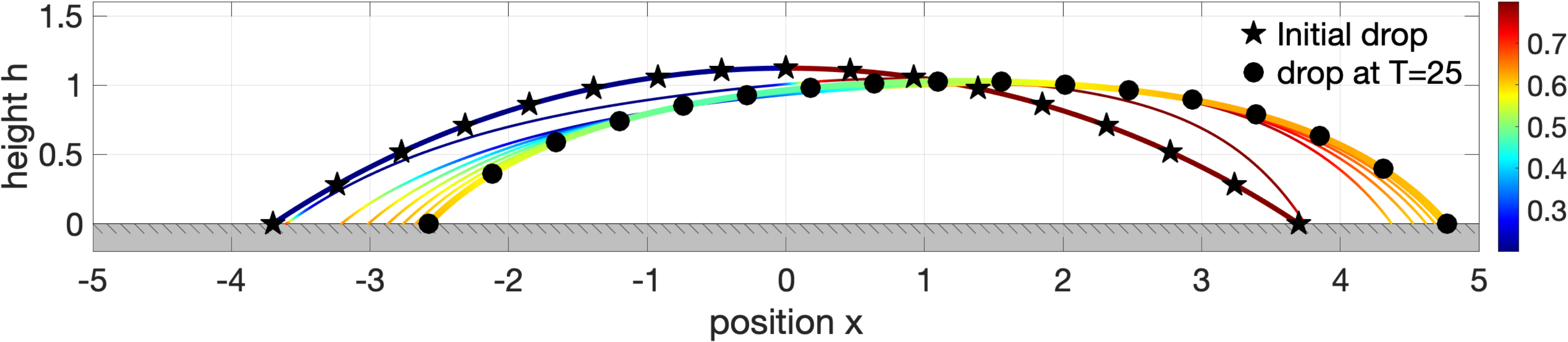} 
 \caption{Spreading of droplets on a plane with lower surface tension due to the surfactant effect. 1st: Time evolution of the capillary surface without the surfactant starting from an initial profile \eqref{initial} marked with black pentagrams to a doplet profile marked with black circles at the final time $T=1.5$. 2nd: With initial concentration $c=0.8$, the flatten droplet with a pancake shape is shown with a density-patched surface at $T=1.5$. 3rd: With an asymmetric  initial concentration \eqref{in_c}, an asymmetric evolution with the surfactant 'drag' effect is shown at equal time intervals and at $T=1.5$.  {\blue 4th: The asymmetric profile dragged by surfactant with  initial concentration \eqref{in_c}  turns out to be symmetric with approximated constant-concentration of surfactant at $T=25$.} }\label{fig_plane}
\end{figure}

\subsection{Droplet on an inclined  surface }
In the second example, we compute the spreading  of a droplet placed on an inclined substrate to observe the competition between the gravitational effect and the surfactant-dependent capillary effect due to the presence of different concentration of the surfactant. 

We use the same initial droplet profile \eqref{initial} and take the inclined angle $\theta_0 = 0.3$ for the substrate.
We use the following computational parameters in Fig. \ref{fig_inc}
\begin{equation}
\beta = 0.1,  \quad \kappa = 0.5, \quad \gamma_{\nsl} - \gamma_{\nsg}=-0.75, \quad  \ssr =1, \quad D=0.1, \quad \Delta t = 0.02, \quad N=800.
\end{equation}
Same as Fig. \ref{fig_plane}, the evolution of the capillary surface is drawn at equal time intervals with solid thin lines and patched with color showing  surfactant concentration. 

In Fig. \ref{fig_inc} (upper one), we take $c=0$ and compute the evolution of a droplet without surfactant as a comparison. We can observe the contact angle hysteresis (CAH) in the droplet profile at $T=2$ (marked with black circles)  with  an advancing contact angle and a receding contact angle due to the gravity, which is  unapparent. 

However, in Fig. \ref{fig_inc} (middle  one), to see the 
 enhanced CAH due to the surfactant effect, we take an asymmetric initial concentration 
\begin{equation}\label{in_c_left}
c(x,0) =  0.45- \frac{0.5}{\pi} \arctan(100 x),
\end{equation}
which decreases from $0.7$ to $0.2$ with a sharp transition; see the patched curve marked with  black pentagrams in Fig. \ref{fig_inc} (middle  one). As time increases, we see the gravity  and the asymmetric concentration of the surfactant (left part higher than the right part of the capillary surface) accelerate the rolling down of the droplet. Then droplet profile at $T=2$ is marked with black circles and patched with the surfactant concentration, in which we observe a significant enhancement of CAH phenomena with  very different advancing contact angles and  receding contact angles. On the other hand, if we switch the initial concentration of the surfactant to 
\begin{equation}\label{in_c_right}
c(x,0) =  0.45+ \frac{0.5}{\pi} \arctan(100 x)
\end{equation}
so that the right part has higher concentration than the left part of the capillary surface. Then in Fig. \ref{fig_inc} (lower  one),  we observe the surfactant effect wins the competition with the gravity and the droplet even rises up instead of rolling down; see  the  droplet profile at $T=2$ (marked with black circles).

%\begin{tabular}{|c|c|c|c|}
%\hline 
%• & clean & uniform  & asymmetric \\ 
%\hline 
%Initial $c$ & 0 & $0.7, 0.2$ & $0.1, 0.85$ \\ 
%\hline 
%• & • & • & • \\ 
%\hline 
%\end{tabular} 

\begin{figure}
\includegraphics[scale=0.34]{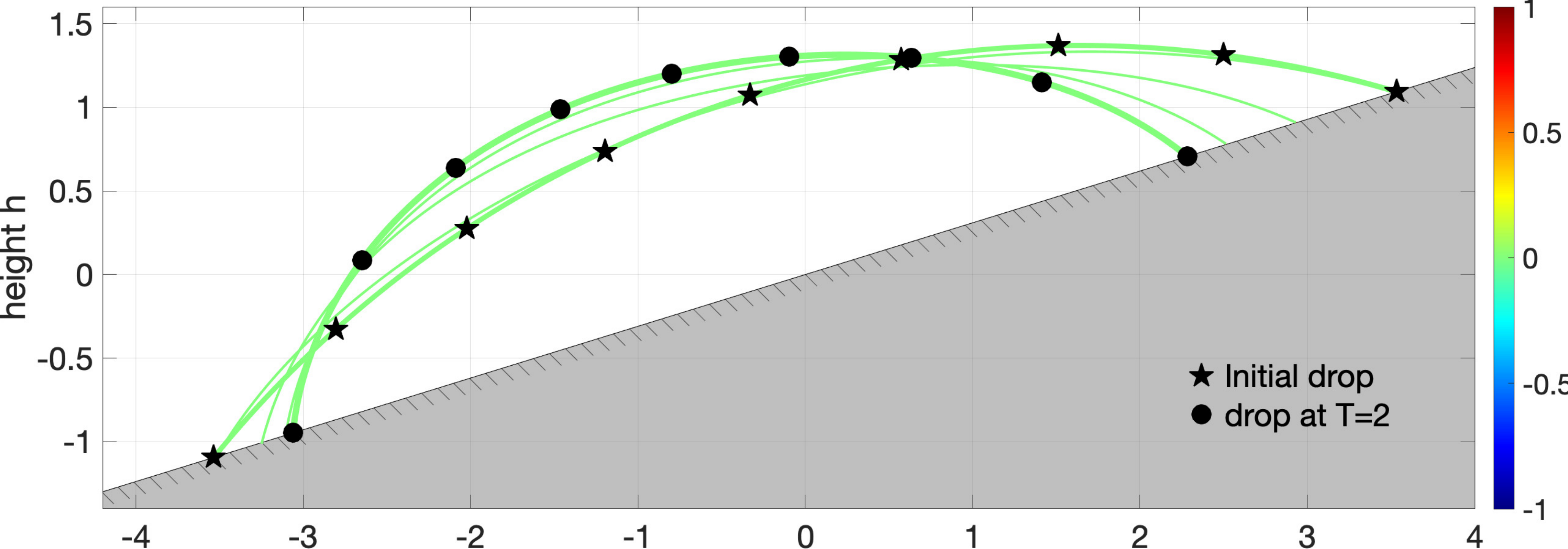}
\includegraphics[scale=0.34]{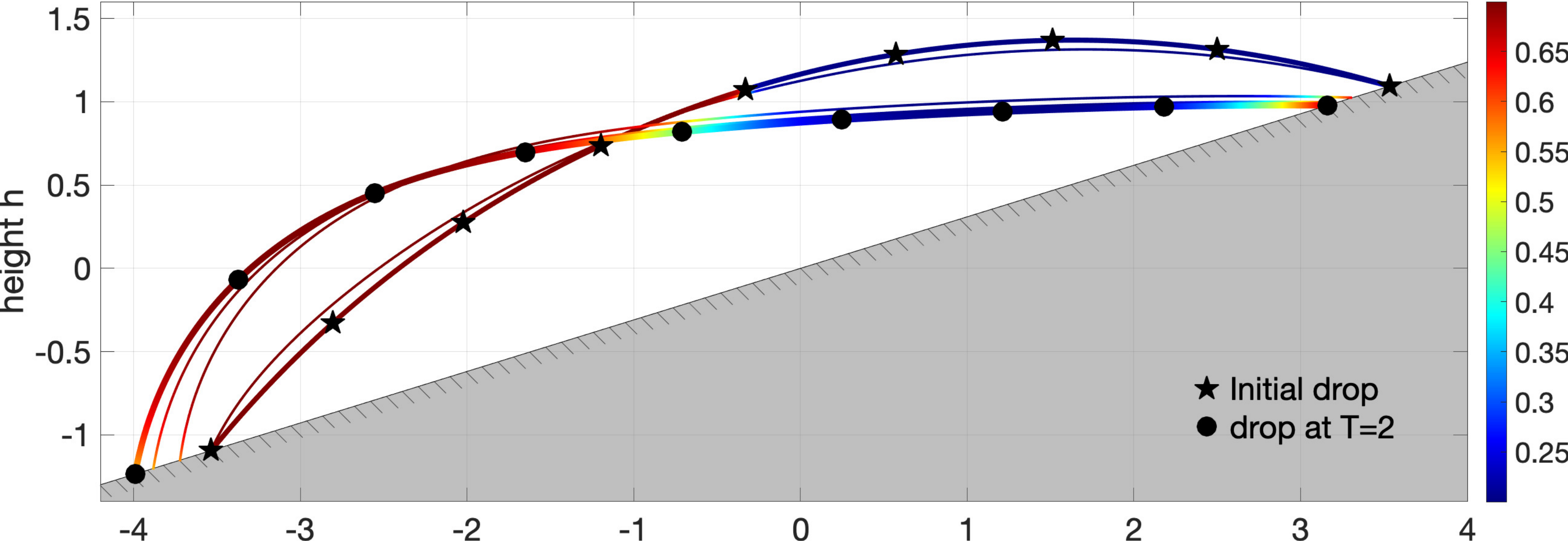} 
 \includegraphics[scale=0.34]{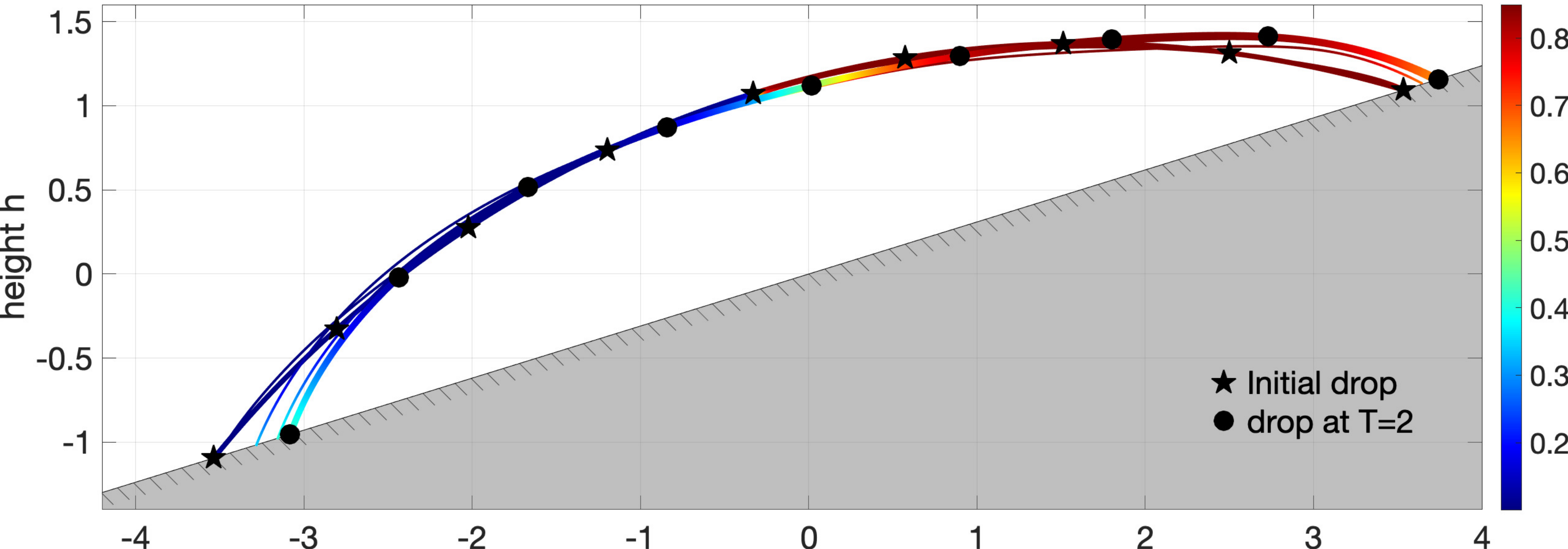} 
 \caption{Enhanced CAH and resistance with the gravity due to asymmetric concentration of surfactant for a droplet placed on an inclined substrate with $\theta_0=0.3$.
 (upper) Time evolution of the capillary surface without the surfactant starting from an initial profile \eqref{initial} marked with black pentagrams to a droplet profile marked with black circles at a final time $T=2$. (middle) With the initial concentration \eqref{in_c_left}, the significant enhancement of rolling down and the CAH phenomena are shown with  density-patched surfaces at equal time intervals and at $T=2$. (lower) With the initial concentration \eqref{in_c_right}, the droplet rises up instead of rolling down because the surfactant effect wins the competition with the gravity.  
 }\label{fig_inc}
\end{figure}

\subsection{Droplets on a textured substrate and  in a cocktail glass}
In the third example, we compute the spreading  of a droplet  on some typical textured substrates such as a cocktail glass and a substrate with constantly changed effective slope. 
The common computational parameters are 
\begin{equation}
\beta = 0.1, \quad \kappa = 0.5,  \quad  \ssr =1; \quad D=0.5; \quad \Delta t = 0.02.
\end{equation}

In Fig. \ref{fig_rough} (upper), we take the initial droplet profile (marked with black pentagrams) as
\begin{equation}\label{initial_rough}
h(x,0) = \sqrt{R^2 - x^2} -R \cos(\theta_{\text{in}})+ w(b_0) + \frac{[w(b_0)-w(-b_0)](x+b_0)}{2b_0}, \quad \,\, R = \frac{b_0}{\theta_{\text{in}}}, \quad  b_0 = 3.7
\end{equation}
with $ \theta_{\text{in}} = \frac{3\pi}{16}$ and  a cocktail glass substrate
\begin{equation}
w(x) = 0.5 \sqrt{x^2+0.1}.
\end{equation} 
Then taking $\gamma_{\nsl} - \gamma_{\nsg}=-0.9, \, N=1600$ and the initial concentration of the surfactant as $c(x,0)=0.2$,  time evolution of the density-patched capillary surfaces is shown at equal time intervals and at the final time $T=4$ (marked with black circles). We observe the capillary rise near the contact lines and the surfactant tends to push themselves and concentrate near the contact lines.

In Fig. \ref{fig_rough} (lower), we take the initial droplet profile (marked with black pentagrams) as
\eqref{initial_rough}
with $ \theta_{\text{in}} = \frac{1.3\pi}{8}$, an effective inclined angle $\theta_0=0.2$ and  a textured substrate
\begin{equation}\label{w_t}
w(x) = 0.1 \bbs{\sin(2x)+\cos(4x)}^2.
\end{equation} 
Then taking $\gamma_{\nsl} - \gamma_{\nsg}=-0.5, \, N=800$ and the initial asymmetric concentration of the surfactant as 
\begin{equation}\label{in_c_right_n}
c(x,0) =  0.45+ \frac{0.7}{\pi} \arctan(100 x),
\end{equation}
time evolution of the density-patched capillary surfaces is shown at equal time intervals and at final time $T=2$ (marked with black circles). We observe an asymmetric rising up of the droplet due to the asymmetric initial concentration and the constantly changed effective slope of the textured substrate.

\begin{figure}
\includegraphics[scale=0.35]{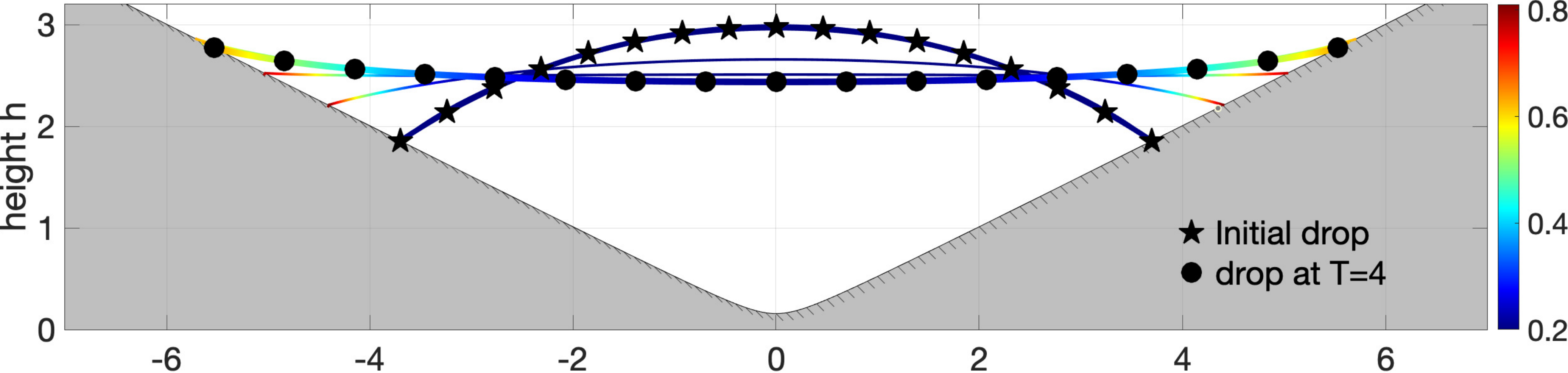} 
\medskip

\includegraphics[scale=0.35]{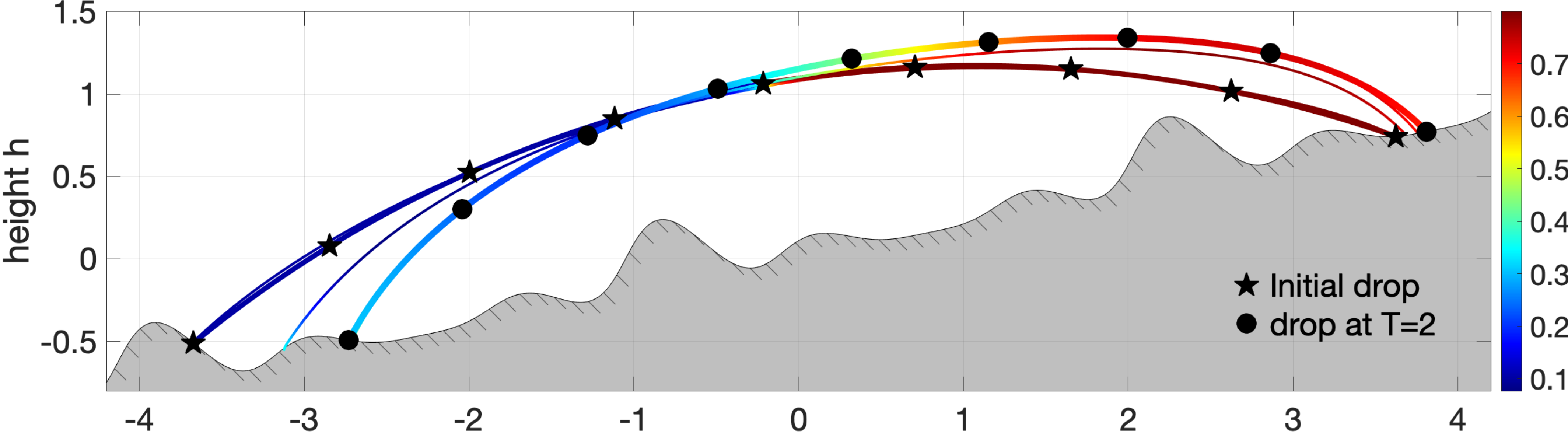} 
\caption{The upper figure is the time evolution of a droplet in a cocktail glass with a uniform initial concentration of surfactant. The initial profile \eqref{initial_rough} is marked with black pentagrams while the density-patched capillary surface at the final time $T=4$ is marked with black circles. The lower figure is  the time evolution of a droplet on a inclined textured substrate   \eqref{w_t}. An asymmetric rising up starting from \eqref{in_c_right_n} is shown with the density-patched capillary surfaces at equal time intervals and at  $T=2$. 
}\label{fig_rough}
\end{figure}

\appendix
\section{Proof of Proposition \ref{prop_equiv}}\label{appA}
\begin{proof}[Proof of Proposition \ref{prop_equiv}]
Recall the continuity equation \eqref{eq-con} on wetting domain $\Omega_t$ and the relation \eqref{rel_c} between $c$ and $\cc$.
We then have
\begin{equation}\label{tm_con}
\begin{aligned}
0=&\pt_t  \bbs{c \sqrt{1+ |\nabla h|^2}} + \nabla \cdot \bbs{c \sqrt{1+|\nabla h|^2} v_{xy} }\\
=&  \sqrt{1+ |\nabla h|^2}  \bbs{(\pt_t + v_{xy} \cdot \nabla_{xy}) c} + c \bbs{\pt_t \sqrt{1+|\nabla h|^2} + \nabla_{xy} \cdot \bbs{\sqrt{1+|\nabla h|^2} v_{xy}}}\\
=&  \sqrt{1+ |\nabla h|^2}  \bbs{(\pt_t + v  \cdot \nabla) \cc} + \cc \bbs{\pt_t \sqrt{1+|\nabla h|^2} + \nabla_{xy} \cdot \bbs{\sqrt{1+|\nabla h|^2} v_{xy}}}
\end{aligned}
\end{equation}
where 
$v_{xy}:= \left( \begin{array}{c}
v_1\\
v_2
\end{array} \right)=\left( \begin{array}{c}
\frac{-h_t h_x}{1+|\nabla h|^2}\\
\frac{-h_t h_y}{1+|\nabla h|^2}
\end{array} \right)
+  \left( \begin{array}{c}
f\\
g
\end{array} \right)$
due to \eqref{flowmap}. 

First, for the last term in \eqref{tm_con}, using the identity  \eqref{MC_ID}, we have
\begin{equation}
\begin{aligned}
&\pt_t \sqrt{1+|\nabla h|^2} + \nabla_{xy} \cdot \bbs{\sqrt{1+|\nabla h|^2} v_{xy}}= h_t H   +  \nabla_{xy} \cdot \bbs{  \sqrt{1+|\nabla h|^2}\left( \begin{array}{c}
f\\
g
\end{array} \right)}
\end{aligned}
\end{equation}
This, together with \eqref{tm_con}, gives the equation for $\cc$
\begin{equation}
  \bbs{(\pt_t + v  \cdot \nabla) \cc} + \cc v_n  H + \frac{1}{\sqrt{1+|\nabla h|^2}} \nabla_{xy} \cdot \bbs{  \sqrt{1+|\nabla h|^2}\left( \begin{array}{c}
f\\
g
\end{array} \right)} = 0.
\end{equation}

Second, we prove the following claim
\begin{equation}\label{tm_second}
 \nabla_s \cdot v_s = \frac{1}{\sqrt{1+|\nabla h|^2}} \nabla_{xy} \cdot \bbs{  \sqrt{1+|\nabla h|^2}\left( \begin{array}{c}
f\\
g
\end{array} \right)} = \pt_x f + \pt_y g + \frac12\left( \begin{array}{c}
f\\
g
\end{array} \right) \cdot \frac{\nabla_{xy}(|\nabla h|^2)}{1+|\nabla h|^2}.
\end{equation}
Denote 
$$\tilde{f}(x,y,h(x,y,t)) = f(x,y,t), \quad \tilde{g}(x,y,h(x,y,t)).$$
Then for  the tangential velocity 
\begin{equation}
v_s = f \tau_1 + g \tau_2 = \left( \begin{array}{c}
\tilde{f}\\
\tilde{g}\\
h_x \tilde{f} + h_y \tilde{g}
\end{array} \right),
\end{equation}
 by the chain rule, we have
\begin{equation}\label{A6}
\nabla \cdot v_s = \pt_x f + \pt_y g.
\end{equation}
On the other hand,
\begin{equation}
\begin{aligned}\label{A7}
-n(n \cdot \nabla) v_s =& -\frac{1}{1+|\nabla h|^2} \left( \begin{array}{c}
-h_x\\
-h_y\\
1
\end{array} \right) \cdot (-h_x \pt_x - h_y \pt_y + \pt_z) \left( \begin{array}{c}
\tilde{f}\\
\tilde{g}\\
h_x \tilde{f} + h_y \tilde{g}
\end{array} \right)\\
=&\frac{1}{1+|\nabla h|^2} \bbs{f(h_{xx}+h_{xy}) + g(h_{xy}+h_{yy})} =  \frac12\left( \begin{array}{c}
f\\
g
\end{array} \right) \cdot \frac{\nabla_{xy}(|\nabla h|^2)}{1+|\nabla h|^2}.
\end{aligned}
\end{equation}
Combining \eqref{A6} and \eqref{A7} yields \eqref{tm_second}.
\end{proof}

\section{Pseudo-code for first order scheme}\label{sec_code}

Below, we present a pseudo-code for the first order scheme in Section \ref{sec_scheme}.
\\1. Grid for time:
$t^n = n \Delta t$, $n=0 ,1, \cdots,$ where $\Delta t$ is time step.
\\ 2. Fix $N$ and set moving grids for space: $x_{j}^n = a^n + {j}{\tau^n},\,\,\tau ^n=\frac{b^n-a^n}{N}$, $j=-1,0, 1, \cdots, N+1.$ 
\\ 3. Calculate volume $V:= \sum_{j=1}^{N-1}( h^0-w)(x^0_j) \tau^0$.
\\4. Denote the finite difference operators
\begin{equation}\label{tm_C2}
\begin{aligned}
&(\pt_x h)_0^n= \frac{4 h_1^n - h_2^n-3h_0^n}{2 \tau^n}, \quad  (\pt_x h)_N^n= \frac{-4 h_{N-1}^n + h_{N-2}^n+ 3 h_{N}^n}{2 \tau^n}, \\
 &(\pt_x h)_j^{n} = \frac{h_{j+1}^{n} - h^{n}_{j-1}}{2 \tau^{n}}, \quad   (\pt_{xx} h)_j^{n}=  \frac{h^{n}_{j+1}-2h^{n}_j + h_{j-1}^{n}} {(\tau^{n})^2},\,\, j=1, \cdots, N-1.
 \end{aligned}
\end{equation}
Denote 
\begin{align*}
(\pt_x w)_0:=\pt_x w(x^n_0), \quad (\pt_x w)_N:=\pt_x w(x^n_N),\quad 
\gamma_i^n:= \gamma(c_i^n), \quad i=0, \cdots, N.
\end{align*}
\\ 5. Update $a^{n+1}, b^{n+1}$, $j=0, 1, \cdots, N$,
\begin{align*}
\ssr \frac{a^{n+1}-a^n}{\Delta t}&=\gamma_0^n\frac{1+ (\pt_x h^n)_0(\pt_x w)_0 }{\sqrt{1+ (\pt_x h^n)_0^2}} + (\gamma_{\nsl}-\gamma_{\nsg}) \sqrt{1+ (\pt_x w)_0^2},
\\
\ssr \frac{b^{n+1}-b^n}{\Delta t}&=-\gamma_N^n\frac{1+ (\pt_x h^n)_N(\pt_x w)_N }{\sqrt{1+ (\pt_x h^n)_N^2}}  -(\gamma_{\nsl}-\gamma_{\nsg}) \sqrt{1+ (\pt_x w)_N^2}.
\end{align*}
\\6. Update the moving grids
$\, x_{j}^{n+1} = a^{n+1} + {j}{\tau^{n+1}}, \,\, \tau^{n+1}= \frac{b^{n+1}-a^{n+1}}{N}, \quad  j=0, 1, \cdots, N.$
\\ 7. From \eqref{inter-u-0}, 
$\,h_j^{n*}= h_j^n + (\pt_x h^n)_j (a^{n+1}-a^n+j(\tau^{n+1}-\tau^n)), \quad j=0, \cdots, N.$
%$$h_0^{n*} = w(x_{0}^n)+ \frac{4h_1^n - h_2^n-3h_0^n}{2\tau^n} (a^{n+1}-a^n), \quad h_N^{n*} = w(x_{N}^n)+  \frac{-4 h_{N-1}^n + h_{N-2}^n+3h_N^n}{2\tau^n}(b^{n+1}-b^n).$$
\\ 8. Solve $h^{n+1}$ semi-implicitly
%$$\beta \frac{u^{n+1}-u^{n*}}{\Delta t\sqrt{1+ (\pt_x u^{n*})^2}}= \frac{\pt }{\pt x}\left( \frac{\pt_x u^{n+1}}{\sqrt{1+ (\pt_x u^{n*})^2}} \right)-\kappa u^{n+1}, \quad x\in(a^{n+1},b^{n+1}),$$
%with $u^{n+1}_N =u^{n+1}_0 = 0 $.
\\For $j=1, \cdots, N-1$,  denote $ \alpha_j=1+ ( h_x^{2})_j^n$ and solve
\begin{align}\label{code-eq-r-0}
&
\beta \alpha_j
\frac{h_j^{n+1}-h^{n*}_j}{\Delta t}
= \gamma_j^n
\frac{h^{n+1}_{j+1} -2 h^{n+1}_j +h^{n+1}_{j-1} }{(\tau^{n+1})^2} -\kappa \alpha_j^{3/2} (h_j^{n+1} \cos \theta_0 + x^{n+1}_j \sin \theta_0  ) +\lambda^{n+1} \alpha_j^{3/2} ,\\
&\sum_{j=1}^{N-1} (h^{n+1}_j-w(x_j^{n+1})) \tau^{n+1} = V, \nonumber
\end{align}
with the Dirichlet boundary condition $h^{n+1}_0= w(x^{n+1}_0),\,\, h^{n+1}_N = w(x_N^{n+1}) $.
%Due to the $O(|x^{n+1}-x^n|)=O(\Delta t)$ accuracy for $\pt_x h^{*}(x^{n+1})$ and $\pt_x h^n(x^n)$, to ensure the stability in the implementation,  we can also replace $(\pt_x h^{n*})_j$ by $(\pt_x h^{n})_j$.

Denote a positive-definite matrix $A_{(N-1)\times(N-1)}=(a_{ij})$
with
\begin{equation}\label{tri-r-0}
\begin{array}{c}
a_{j, j-1} := -\gamma_j^n, \,\, a_{j,j+1} := -\gamma_j^n, \quad  
a_{j,j} :=2 \gamma_j^n+ \frac{\beta (\tau^{n+1})^2}{\Delta t}\alpha_j+ \kappa \cos \theta_0 (\tau^{n+1})^2 \alpha_j^{\frac32}
\end{array}
\end{equation}
and a vector of length $N-1$ 
$$
\tilde{f}_j:= \frac{\beta (\tau^{n+1})^2}{\Delta t} h_j^{n*} \alpha_j-\kappa \sin \theta_0 x_{j}^{n+1} (\tau^{n+1})^2 \alpha_j^{\frac32}, \quad j= 1, \cdots, N-1
$$
 and \eqref{code-eq-r-0} becomes for $j=1, \cdots, N-1$, 
\begin{equation}\label{1st-eq-u}
\begin{aligned}
a_{j,j-1} h^{n+1}_{j-1} + a_{j,j} h^{n+1}_j + a_{j,j+1} h^{n+1}_{j+1} - \alpha_j^\frac32 (\tau^{n+1})^2\lambda^{n+1}=\tilde{f}_j.
\end{aligned}
\end{equation}
Denote
\begin{equation}
f_1=\tilde{f}_1+\gamma_1^n w(x_0^{n+1}),\,\,   \{f_j= \tilde{f}_j\}_{j=2}^{N-2} , \,\, f_{N-1} = \tilde{f}_{N-1} + \gamma_{N-1}^n w(x_N^{n+1}), \,\, f_N:= \sum_{j=1}^{N-1} w(x_j^{n+1})  + \frac{V}{\tau^{n+1}}.
\end{equation}
 The resulting linear system $\bar{A} y = f$ has a non-singular   matrix 
$
\bar{A}=\left(
\begin{array}{cc}
A & \alpha^\frac32 \\ 
e^\top & 0
\end{array} 
 \right)_{N\times N},
$
where  $y^\top=(h^{n+1}_1, \cdots, h^{n+1}_{N-1}, -(\tau^{n+1})^2\lambda^{n+1})$ and $e^\top=(1,\cdots , 1)\in \mathbb{R}^{N-1}$.   
\\ 9. Solve $c^{n+1}$ from \eqref{sur_update} implicitly.

 Denote
$$
(h_t)_j^{n+1}:= \frac{h_j^{n+1}-h_j^{n*}}{\Delta t},\quad j = 1, \cdots, N-1,
$$
$$
\tilde{f}_j: =- D \frac{(h_x)_j^{n+1} (h_{xx})_j^{n+1}}{1+(h_x^2)_j^{n+1}} (c_x)_j^n 
+ (h_t)_j^{n+1} (h_x)_j^{n+1}  (c_x)_j^n 
+   \frac{(h_t)_j^{n+1} (h_{xx})_j^{n+1}}{1+(h_x^2)_j^{n+1}}  c_j^n. 
$$
Then \eqref{sur_update} becomes
\begin{equation}\label{ceq-tm15}
\begin{aligned}
,\\
(1+(h_x^2)_j^{n+1}) \frac{c^{n+1}_j-c^n_j}{\Delta t} 
=D \frac{c^{n+1}_{j+1} -2 c^{n+1}_j  + c^{n+1}_{j-1} }{(\tau^{n+1})^2} 
+\tilde{f}_j, \quad j = 1, \cdots, N-1
\end{aligned}
\end{equation}
with boundary conditions
\begin{equation}\label{BC416}
\begin{aligned}
D  (c_x)^{n+1}_0+ c^{n+1}_0(1+(h_x)_0 (w_x)_0)\frac{a^{n+1}-a^n}{\Delta t}=0,\\
D  (c_x)^{n+1}_N+ c^{n+1}_N(1+(h_x)_N (w_x)_N)\frac{b^{n+1}-b^n}{\Delta t}=0,
\end{aligned}
\end{equation}
where
$$
(\pt_x c)_0^{n+1}= \frac{4 c_1^{n+1} - c_2^{n+1} -3c_0^{n+1}}{2 \tau^{n+1}}, \quad  
(\pt_x c)_N^{n+1}= \frac{-4 c_{N-1}^{n+1} + c_{N-2}^{n+1}+ 3 c_{N}^{n+1}}{2 \tau^{n+1}}.
$$
Let
$$
\iota_0: = \frac{2 \tau^{n+1}}{D} (1+(h_x)_0 (w_x)_0)\frac{a^{n+1}-a^n}{\Delta t}, \quad  \iota_N: = \frac{2 \tau^{n+1}}{D} (1+(h_x)_N (w_x)_N)\frac{b^{n+1}-b^n}{\Delta t},
$$
then boundary condition \eqref{BC416} becomes
$$
 (\iota_0 -3)c_0^{n+1} + 4 c_1^{n+1} - c_2^{n+1} = 0,\quad  c_{N-2}^{n+1} - 4 c_{N-1}^{n+1}+ (\iota_N +3)c_N^{n+1}   = 0.
$$
Now we recast  \eqref{ceq-tm15} in a $N+1$ order matrix form 
$$
  B c^{n+1} = f, \quad B=(b_{ij})_{i,j=0, \cdots, N}
$$
where, for $i=0$,
$
b_{00} = \iota_0 -3, \quad b_{01} = 4, \quad b_{02} = -1;
$
for $i=1, \cdots, N-1$, 
$
b_{i, i-1} = -1, \quad
b_{i, i} = 2+ \frac{(\tau^{n+1})^2}{D \Delta t} (1+(h_x^2)_i^{n+1}), \quad
b_{i, i+1} = -1;
$
for $i=N$,
$
 b_{N,N-2}=1,  b_{N,N-1}=-4,  b_{N,N}=\iota_N +3,
$
and
$f_0=0, f_N = 0, f_i =\frac{(\tau^{n+1})^2}{D}\bbs{ \tilde{f}_i + \frac{1+(h_x^2)_i^{n+1}}{\Delta t}
c^n_j} ,  \,\, i=1, \cdots, N-1.$

\section*{Acknowledgments}
The authors would like to thank Prof. Masao Doi and Prof.  Tom Witelski for some helpful suggestions. 
J.-G. Liu was supported in part by the National Science Foundation (NSF) under award DMS-2106988.

\bibliographystyle{abbrv}
\bibliography{dropsur}

\begin{thebibliography}{10}

\bibitem{cermelli2005transport}
P.~Cermelli, E.~Fried, and M.~Gurtin.
\newblock Transport relations for surface integrals arising in the formulation
  of balance laws for evolving fluid interfaces.
\newblock {\em Journal of Fluid Mechanics}, 544:339--351, 2005.

\bibitem{chen2014conservative}
K.-Y. Chen and M.-C. Lai.
\newblock A conservative scheme for solving coupled surface-bulk
  convection--diffusion equations with an application to interfacial flows with
  soluble surfactant.
\newblock {\em Journal of Computational Physics}, 257:1--18, 2014.

\bibitem{chou2015recent}
W.-L. Chou, P.-Y. Lee, C.-L. Yang, W.-Y. Huang, and Y.-S. Lin.
\newblock Recent advances in applications of droplet microfluidics.
\newblock {\em Micromachines}, 6(9):1249--1271, 2015.

\bibitem{Craster_Matar_2009}
R.~V. Craster and O.~K. Matar.
\newblock Dynamics and stability of thin liquid films.
\newblock {\em Reviews of Modern Physics}, 81(3):1131–1198, Aug 2009.

\bibitem{deGennes_1985}
P.~G. de~Gennes.
\newblock Wetting: statics and dynamics.
\newblock {\em Reviews of Modern Physics}, 57(3):827--863, Jul 1985.

\bibitem{de2013capillarity}
P.-G. De~Gennes, F.~Brochard-Wyart, and D.~Qu{\'e}r{\'e}.
\newblock {\em Capillarity and wetting phenomena: drops, bubbles, pearls,
  waves}.
\newblock Springer Science \& Business Media, 2013.

\bibitem{Desvillettes_Villani_2002}
L.~Desvillettes and C.~Villani.
\newblock On a variant of korn's inequality arising in statistical mechanics.
\newblock {\em ESAIM: Control, Optimisation and Calculus of Variations},
  8:603--619, 2002.

\bibitem{Doi_2013}
M.~Doi.
\newblock {\em Soft matter physics}.
\newblock Oxford University Press, first edition edition, 2013.

\bibitem{doi2013soft}
M.~Doi.
\newblock {\em Soft matter physics}.
\newblock Oxford University Press, 2013.

\bibitem{Doi_2021}
M.~Doi.
\newblock Onsager principle in polymer dynamics.
\newblock {\em Progress in Polymer Science}, 112:101339, Jan 2021.

\bibitem{Ganesan_2015}
S.~Ganesan.
\newblock Simulations of impinging droplets with surfactant-dependent dynamic
  contact angle.
\newblock {\em Journal of Computational Physics}, 301:178--200, Nov 2015.

\bibitem{gao2017global}
Y.~Gao, H.~Ji, J.-G. Liu, and T.~P. Witelski.
\newblock Global existence of solutions to a tear film model with locally
  elevated evaporation rates.
\newblock {\em Physica D: Nonlinear Phenomena}, 350:13--25, 2017.

\bibitem{gao2020projection}
Y.~Gao and J.-G. Liu.
\newblock Projection method for droplet dynamics on groove-textured surface
  with merging and splitting.
\newblock {\em arXiv preprint arXiv:2005.07851}, 2020.

\bibitem{gao2020gradient}
Y.~Gao and J.-G. Liu.
\newblock Gradient flow formulation and second order numerical method for
  motion by mean curvature and contact line dynamics on rough surface.
\newblock {\em Interfaces and Free boundaries}, 23(1):130--158, 2021.

\bibitem{Garcke_Wieland_2006}
H.~Garcke and S.~Wieland.
\newblock Surfactant spreading on thin viscous films: Nonnegative solutions of
  a coupled degenerate system.
\newblock {\em SIAM Journal on Mathematical Analysis}, 37(6):2025--2048, Jan
  2006.

\bibitem{goldstein2002classical}
H.~Goldstein, C.~Poole, and J.~Safko.
\newblock Classical mechanics. 3rd, 2002.

\bibitem{Grunewald_Kim_2011}
N.~Grunewald and I.~Kim.
\newblock A variational approach to a quasi-static droplet model.
\newblock {\em Calculus of Variations and Partial Differential Equations},
  41(1--2):1--19, May 2011.

\bibitem{Guo_Tice_2018}
Y.~Guo and I.~Tice.
\newblock Stability of contact lines in fluids: 2d stokes flow.
\newblock {\em Archive for Rational Mechanics and Analysis}, 227(2):767–854,
  Feb 2018.

\bibitem{gurtin1999configurational}
M.~E. Gurtin.
\newblock {\em Configurational forces as basic concepts of continuum physics},
  volume 137.
\newblock Springer Science \& Business Media, 1999.

\bibitem{Karapetsas_Craster_Matar_2011}
G.~Karapetsas, R.~V. Craster, and O.~K. Matar.
\newblock On surfactant-enhanced spreading and superspreading of liquid drops
  on solid surfaces.
\newblock {\em Journal of Fluid Mechanics}, 670:5--37, Mar 2011.

\bibitem{Karapetsas_Sahu_Matar_2016}
G.~Karapetsas, K.~C. Sahu, and O.~K. Matar.
\newblock Evaporation of sessile droplets laden with particles and insoluble
  surfactants.
\newblock {\em Langmuir}, 32(27):6871--6881, Jul 2016.

\bibitem{Lai_2010}
M.-C. Lai.
\newblock Numerical simulation of moving contact lines with surfactant by
  immersed boundary method.
\newblock {\em Communications in Computational Physics}, 8(4):735--757, Jun
  2010.

\bibitem{lai2008immersed}
M.-C. Lai, Y.-H. Tseng, and H.~Huang.
\newblock An immersed boundary method for interfacial flows with insoluble
  surfactant.
\newblock {\em Journal of Computational Physics}, 227(15):7279--7293, 2008.

\bibitem{limat2004three}
L.~Limat and H.~Stone.
\newblock Three-dimensional lubrication model of a contact line corner
  singularity.
\newblock {\em EPL (Europhysics Letters)}, 65(3):365, 2004.

\bibitem{marchand2011surface}
A.~Marchand, J.~H. Weijs, J.~H. Snoeijer, and B.~Andreotti.
\newblock Why is surface tension a force parallel to the interface?
\newblock {\em American Journal of Physics}, 79(10):999--1008, 2011.

\bibitem{olajire2014review}
A.~A. Olajire.
\newblock Review of asp eor (alkaline surfactant polymer enhanced oil recovery)
  technology in the petroleum industry: Prospects and challenges.
\newblock {\em Energy}, 77:963--982, 2014.

\bibitem{Qian_Wang_Sheng_2006}
T.~Qian, X.-P. Wang, and P.~Sheng.
\newblock A variational approach to moving contact line hydrodynamics.
\newblock {\em Journal of Fluid Mechanics}, 564:333, Oct 2006.

\bibitem{shembekar2016droplet}
N.~Shembekar, C.~Chaipan, R.~Utharala, and C.~A. Merten.
\newblock Droplet-based microfluidics in drug discovery, transcriptomics and
  high-throughput molecular genetics.
\newblock {\em Lab on a Chip}, 16(8):1314--1331, 2016.

\bibitem{Stone_1990}
H.~A. Stone.
\newblock A simple derivation of the time‐dependent convective‐diffusion
  equation for surfactant transport along a deforming interface.
\newblock {\em Physics of Fluids A: Fluid Dynamics}, 2(1):111--112, Jan 1990.

\bibitem{Tice21}
I.~Tice and L.~Wu.
\newblock Dynamics and stability of sessile drops with contact points.
\newblock {\em Journal of Differential Equations}, 272:648 -- 731, 2021.

\bibitem{wu2019drying}
M.~Wu, Y.~Di, X.~Man, and M.~Doi.
\newblock Drying droplets with soluble surfactants.
\newblock {\em Langmuir}, 35(45):14734--14741, 2019.

\bibitem{xu2014level}
J.-J. Xu and W.~Ren.
\newblock A level-set method for two-phase flows with moving contact line and
  insoluble surfactant.
\newblock {\em Journal of Computational Physics}, 263:71--90, 2014.

\bibitem{xu2016variational}
X.~Xu, Y.~Di, and M.~Doi.
\newblock Variational method for contact line problems in sliding liquids.
\newblock {\em Phys. Fluids}, 28:087101, 2016.

\bibitem{Zhang_Xu_Ren_2014}
Z.~Zhang, S.~Xu, and W.~Ren.
\newblock Derivation of a continuum model and the energy law for moving contact
  lines with insoluble surfactants.
\newblock {\em Physics of Fluids}, 26(6):062103, Jun 2014.

\end{thebibliography}

\end{document}